\pgfplotsset{compat=1.16}
\renewcommand{\epsilon}{\varepsilon}
\title{Approximating $q \rightarrow p$ Norms of Non-Negative Matrices in Nearly-Linear Time}
\date{}
\author{
    \'Etienne Objois\footnote{IRIF, Université Paris Cité, \texttt{objois@irif.fr}} 
    \and
    Adrian Vladu\footnote{CNRS, IRIF, Université Paris Cité, \texttt{vladu@irif.fr}}
}
\newtheorem{theorem}{Theorem}
\newtheorem*{theorem*}{Theorem}
\newtheorem{lemma}[theorem]{Lemma}
\newtheorem*{lemma*}{Lemma}
\newtheorem{corollary}[theorem]{Corollary}
\newtheorem{definition}{Definition}
\newtheorem{claim}{Claim}
\newcommand{\rr}{\mathbb{R}}
\newcommand{\OO}{O}
\newcommand{\OOt}{\widetilde{\OO}}
\newcommand{\abs}[1]{\left\lvert #1 \right\rvert}
\newcommand{\norm}[1]{\left\lVert #1 \right\rVert}
\NewDocumentCommand{\qpnorm}{ O{q} O{p} m }{\left\lVert #3 \right\rVert_{#1 \rightarrow #2}}
\newcommand{\psnorm}[2][p]{\left\lVert #2 \right\rVert_{#1}}
\newcommand{\dnorm}[1]{\left\lVert #1 \right\rVert_{*}}
\global\long\def\AA{\bm{\mathit{A}}}\global\long\def\DD{\bm{\mathit{D}}}\global\long\def\FF{\bm{\mathit{F}}}\global\long\def\GG{\bm{\mathit{G}}}\global\long\def\LL{\bm{\mathit{L}}}\global\long\def\QQ{\bm{\mathit{Q}}}\global\long\def\RR{\bm{\mathit{R}}}
\global\long\def\vf{\bm{f}}
\global\long\def\vu{\bm{u}}
\global\long\def\vv{\bm{v}}
\global\long\def\vx{\bm{x}}
\global\long\def\vy{\bm{y}}
\global\long\def\vz{\bm{z}}
\global\long\def\vq{\bm{\Phi}}
\global\long\def\vdelta{\bm{\delta}}
\global\long\def\ones{\bm{1}}
\global\long\def\zeros{\bm{0}}
\global\long\def\lqp{\ell_{q \rightarrow p}}
\global\long\def\lp{\ell_{p}}
\global\long\def\lq{\ell_{q}}
\global\long\def\Rlo{\mathcal{R}_{\textnormal{lin}}}
\DeclareMathOperator{\np}{NP}
\DeclareMathOperator{\dtime}{DTIME}
\DeclareMathOperator{\bptime}{BPTIME}
\newcommand\sbullet[1][.5]{\mathbin{\vcenter{\hbox{\scalebox{#1}{$\bullet$}}}}}
\DeclareMathOperator{\conv}{conv}
\DeclareMathOperator{\IN}{IN}
\DeclareMathOperator{\OUT}{OUT}
\DeclareMathOperator{\poly}{poly}
\DeclareMathOperator{\opt}{\normalfont \textsc{OPT}}
\DeclareMathOperator{\cost}{\normalfont \textsc{cost}}
\DeclareMathOperator{\CR}{\normalfont \textsc{Competitive-Ratio}}
\DeclareMathOperator{\agg}{\normalfont \textsc{agg}}
\DeclareMathOperator{\nnz}{\normalfont \textnormal{nnz}}
\newcommand{\OBL}{\AA}
\begin{document}

\maketitle

\begin{abstract}
    We provide the first nearly-linear time algorithm for approximating $\ell_{q \rightarrow p}$-norms of non-negative matrices, for $q \geq p \geq 1$. Our algorithm returns a $(1-\varepsilon)$-approximation to the matrix norm in time $\widetilde{O}\left(\frac{1}{q \varepsilon} \cdot \normalfont \textnormal{nnz}(\bm{\mathit{A}})\right)$, where $\bm{\mathit{A}}$ is the input matrix, and improves upon the previous state of the art, which either proved convergence only in the limit [Boyd '74], or had very high polynomial running times [Bhaskara-Vijayraghavan, SODA '11]. Our algorithm is extremely simple, and is largely inspired from the coordinate-scaling approach used for positive linear program solvers.

    We note that our algorithm can readily be used in the [Englert-R\"{a}cke, FOCS '09] to improve the running time of constructing $O(\log n)$-competitive $\ell_p$-oblivious routings. We thus complement this result with a simple cutting-plane based scheme for computing \textit{optimal} oblivious routings in graphs with respect to any monotone norm. Combined with state of the art cutting-plane solvers, this scheme runs in time $\widetilde{O}(n^6 m^3)$, which is significantly faster than the one based on  Englert-R\"{a}cke, and generalizes the $\ell_\infty$ routing algorithm of [Azar-Cohen-Fiat-Kaplan-Räcke, STOC '03].

\end{abstract}
\newpage

\section{Introduction}

We are interested in computing the norm of matrices. We define the $\lqp$-norm of a matrix $\AA \in \rr^{m \times n}$ as
\begin{equation}
    \label{eq:lqp_objective}
    \qpnorm{\AA} = \max_{\vx \neq \zeros} \frac{\psnorm[p]{\AA \vx}}{\psnorm[q]{\vx}}
\end{equation}
where $\psnorm[p]{\vx} = \left(\sum_{i=1}^{n} \abs{\vx_i}^p \right)^{1/p}$, and $q,p \geq 1$. When $q = p = 2$, the quantity $\qpnorm[2][2]{\AA}$ corresponds to the spectral norm. In all generality, $\qpnorm{\AA}$ corresponds to the maximum stretch of the operator $\AA$ from the normed space $\ell_q^n$ to $\ell_p^m$.

The $\lqp$-norm of a matrix appears is different optimization problems. For instance, the case $q = 1$ and $p = \infty$ is the Grothendieck problem, where the goal is to maximize the quantity $\langle \AA \vx, \vy \rangle$ over $\vx,\vy$ such that $\psnorm[\infty]{\vx}, \psnorm[\infty]{\vy} \leq 1$. Our problem can be seen as a generalized version of the Grothendieck problem.
\begin{equation*}
    \qpnorm{\AA} = \max_{\begin{array}{l}
        \psnorm[q]{\vx} \leq 1 \\
        \psnorm[p^*]{\vy} \leq 1 \\
    \end{array}} \langle \vy, \AA \vx \rangle = \qpnorm[p^*][q^*]{\AA^T}\,,
\end{equation*}
where $q^*,p^*$ denotes the dual norms of $q,p$ i.e. they satisfy $1/q + 1/q^* = 1/p + 1/p^* = 1$. Notice that by changing from primal to dual norms, if $q \geq p \geq 1$, then $p^* \geq q^* \geq 1$. Hence we cannot switch the order of $p$ and $q$.

When $1 \leq q < p$, literature refers to $\qpnorm{\AA}$ as a hypercontractive norm. For example, $\qpnorm[2][4]{\AA}$ is linked with quantum information theory and when $p \geq 4$, approximating $\qpnorm[2][p]{\AA}$ would prove wrong the Small-Set Expansion Hypothesis from \cite{RS10}, a variant of Khot's Unique Game Conjecture \cite{Barak_2012}. 

When $p = q$, $\qpnorm[p][p]{\AA}$ is known as the $p$-norm of $\AA$ and has important applications to computing linear oblivious routings for graphs~\cite{ER09}. Noteworthy, \cite{BV11} showed that by being able to approximate the $p$-norm of a non-negative matrix, one can compute in polynomial time $\OO(\log n)$-competitive linear oblivious routings in undirected graphs when the load function is an unknown monotone norm and the aggregation function is an $\lp$-norm on the load vector.

Finally, the non-hypercontractive norm case refers to parameters $p,q$ satisfying $1 \leq p < q$. This problem is linked with robust optimization \cite{Ste05}. The general version of the problem is known to be NP-hard under some conjectures for general matrices (see \cite{BG19} for details). In this paper, we focus on parameters $p,q$ satisfying $q \geq p \geq 1$, as well as the matrix $\AA$ being non-negative.

\subsection{Our Contributions}

We give the first input-sparsity time algorithm to approximate the $\lqp$-norm of a non-negative matrix when $q \geq p \geq 1$. The result is stated in the following theorem.
\begin{theorem}[Informal version of \autoref{thm:algo_qpnorm_better}]
    \label{thm:intro_qpnorm}
    Given a non-negative matrix $\AA \in \rr^{m \times n}$, a positive real $0 < \epsilon \leq 1/2q$, two reals $q \geq p \geq 1$, and a guess $V$ on $\qpnorm{\AA}$, \autoref{alg:qpnorm2} recovers $\vx$ such that 
    \begin{equation*}
        \frac{\psnorm[p]{\AA \vx}}{\psnorm[q]{\vx}} \geq (1-\epsilon)V
    \end{equation*}
    or certifies infeasibility in time $\OOt\left(\frac{\nnz(\AA)}{q \epsilon}\right)$\footnote{$\OOt$ hides poly logarithmic factors in $\epsilon^{-1},n,m,q,p$.}.
\end{theorem}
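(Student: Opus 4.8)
Since $\AA\geq 0$, replacing $\vx$ by $\abs{\vx}$ can only increase $\psnorm[p]{\AA\vx}$ and leaves $\psnorm[q]{\vx}$ unchanged, so we may assume $\vx\geq 0$; rescaling $\AA\leftarrow\AA/V$ reduces to the case $V=1$, and by homogeneity the task becomes to produce $\vx\geq 0$ with $\psnorm[q]{\vx}=1$ and $\psnorm[p]{\AA\vx}\geq 1-\epsilon$, or else to certify $\qpnorm{\AA}<1$. Two structural facts drive the algorithm. First, writing $\vg:=\AA^{\top}(\AA\vx)^{p-1}$ (entrywise power), so that $p\,\vg=\nabla_{\vx}\psnorm[p]{\AA\vx}^{p}$ and $\langle\vx,\vg\rangle=\psnorm[p]{\AA\vx}^{p}$ by Euler's identity, the KKT conditions of $\max\{\psnorm[p]{\AA\vx}^{p}:\psnorm[q]{\vx}^{q}=1,\ \vx\geq 0\}$ force a maximizer $\vx^{\star}$ to satisfy the fixed-point identity $(\vx^{\star}_{j})^{q-1}=\vg^{\star}_{j}/\qpnorm{\AA}^{p}$ wherever $\vx^{\star}_{j}>0$; equivalently the ratios $\vr_{j}:=\vg_{j}/((\vx_{j})^{q-1}\psnorm[p]{\AA\vx}^{p})$ are all equal to $1$ at optimality, so driving $\vr$ toward $\ones$ drives $\vx$ toward a maximizer. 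Second, since $0\leq(\AA\vx)_{i}\leq(\AA\ones)_{i}$ whenever $\psnorm[q]{\vx}\leq 1$, the uniform point $\vx^{(0)}_{j}=n^{-1/q}$ already achieves $\psnorm[p]{\AA\vx^{(0)}}\geq n^{-1/q}\qpnorm{\AA}$, so its relative suboptimality is at most $n$ — this is why the final running time contains no dependence on the magnitudes of the entries of $\AA$.

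\textbf{The iteration.} Starting from $\vx^{(0)}$, round $t$ computes $\vy^{(t)}=\AA\vx^{(t)}$, then $\vg^{(t)}$ and the ratios $\vr^{(t)}$, and performs the multiplicative coordinate-scaling update $\vx^{(t+1)}_{j}\propto\vx^{(t)}_{j}\,(\vr^{(t)}_{j})^{\eta}$ followed by renormalization to $\psnorm[q]{\cdot}=1$, with step size $\eta=\Theta(q\epsilon)$, which the hypothesis $\epsilon\leq 1/(2q)$ keeps below a constant. In the simplex variable $\vz=\vx^{\circ q}$ this is exactly entropic mirror descent / multiplicative weights with a step proportional to $\vr$, i.e.\ to the gradient of the objective in these coordinates. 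Each round reads every nonzero of $\AA$ a constant number of times, so it costs $\OOt(\nnz(\AA))$, and it remains to bound the number of rounds by $\OOt(1/(q\epsilon))$.

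\textbf{Progress analysis (the crux).} The engine is a single per-round progress lemma: $\psnorm[p]{\AA\vx^{(t)}}$ is non-decreasing — a quantitative version of Boyd's monotonicity — and, as long as $\psnorm[p]{\AA\vx^{(t)}}\leq(1-\epsilon)\qpnorm{\AA}$, one round gains a definite amount. I would prove monotonicity by expanding $\psnorm[p]{\AA\vx^{(t+1)}}^{p}$, substituting the update, and invoking H\"older's inequality together with $\AA\geq 0$; the slack in that H\"older step is controlled by the spread of $\vr^{(t)}$ around $\ones$, which in turn lower-bounds the relative suboptimality of $\vx^{(t)}$, so in the non-$(1-\epsilon)$-optimal regime the gain in $\log\psnorm[p]{\AA\vx^{(t)}}$ is $\Omega(q\epsilon)$ times the current relative gap — the extra factor $q$ coming from the curvature of the $\ell_{q}$-renormalization $t\mapsto\bigl(\sum_{j}z_{j}(\vr_{j})^{qt}\bigr)^{1/q}$, which is exactly what distinguishes this from the trivial $\ell_{\infty}$ problem. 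To convert ``gain proportional to the gap'' into only $\OOt(1/(q\epsilon))$ rounds I would, as in width-independent positive-LP solvers, run the argument against a Lagrangian/Bregman potential (the log-objective corrected by a KL term for the simplex iterate) that sits between constant multiples of the relative suboptimality — using a sharpness estimate that near-optimality of the objective forces near-optimality of $\vz$ — so the relative gap contracts by a $(1-\Omega(q\epsilon))$ factor per round from its initial value $\leq n$, reaching $\epsilon$ within $\OOt(1/(q\epsilon))$ rounds; the condition $\epsilon\leq 1/(2q)$ is used throughout to bound $q$-th powers of the $(1\pm\OO(\epsilon))$ factors created by renormalization, via $(1+\epsilon)^{q}\leq e^{q\epsilon}\leq e^{1/2}$. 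The main obstacle is precisely this progress lemma: because $\psnorm[p]{\AA\vx}^{p}$ is not linear in $\vx$, linear-programming potential arguments do not transfer verbatim, and one must quantify the H\"older slack in terms of the distance from the fixed-point identity while tracking constants tightly enough to expose the $1/q$ saving; the boundary cases $p=1$ (where $\vg$ involves a $0$-th power) and $q\to 1$ need a little separate care.

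\textbf{Termination and the certificate.} If some iterate attains $\psnorm[p]{\AA\vx^{(t)}}\geq 1-\epsilon$ we return $\vx^{(t)}$. Otherwise, were $\qpnorm{\AA}\geq 1$, then $(1-\epsilon)\qpnorm{\AA}\geq 1-\epsilon>\psnorm[p]{\AA\vx^{(t)}}$ for every $t$, so the progress lemma would apply at every round and drive $\psnorm[p]{\AA\vx^{(t)}}$ above $(1-\epsilon)\qpnorm{\AA}\geq 1-\epsilon$ within the round budget, a contradiction; hence $\qpnorm{\AA}<1=V$. Moreover, at the last iterate the H\"older and mirror-descent inequalities above are tight up to $1\pm\OO(\epsilon)$, which yields positive row/column rescalings of $\AA$ witnessing $\qpnorm{\AA}<V$; this witness is returned as the infeasibility certificate. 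Finally, wrapping the decision procedure in a binary search over $V$ turns it into a $(1-\epsilon)$-approximation of $\qpnorm{\AA}$ at an extra $\OOt(1)$ overhead.
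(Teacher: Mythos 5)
Your proposal replaces the paper's algorithm with a different one — a damped multiplicative/mirror-ascent update $\vx_j \propto \vx_j \vr_j^{\eta}$ with $\ell_q$-renormalization — and the analysis you sketch for it has a genuine gap at exactly the point you flag as "the crux". The paper never needs, and never proves, a per-round multiplicative contraction of the optimality gap. Instead it scales by a fixed factor $(1+\alpha)$, $\alpha=\epsilon/8$, all coordinates whose potential $\vq(\vx)_i$ exceeds $((1-\epsilon/4)V)^q$, never renormalizes, and combines two ingredients: a mediant/telescoping lemma (if every increment of $\psnorm[p]{\AA\vx}^q$ over the increment of $\psnorm[q]{\vx}^q$ is at least $((1-\epsilon/2)V)^q$, the final ratio is $(1-\epsilon)V$ once $\psnorm[q]{\vx}^q \geq 4/(q\epsilon)$), and a "glass ceiling" argument (scaled coordinates' potentials do not increase, unscaled ones grow by at most $(1+\alpha)^{q-1}$), which together with the Collatz--Wielandt-type bound of \cite{BV11} guarantees that one fixed coordinate is hit in \emph{every} iteration, so $\psnorm[q]{\vx}^q$ grows like $(1+\alpha)^{qT}/n$ and $T=\OOt(1/(q\epsilon))$ suffices. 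Your route, by contrast, hinges on the claim that the relative gap contracts by $(1-\Omega(q\epsilon))$ per round, justified by "a sharpness estimate that near-optimality of the objective forces near-optimality of $\vz$". No such error bound holds in general: the reparametrized objective $\vu \mapsto \psnorm[p]{\AA\vu^{1/q}}$ is concave but not strongly concave, its maximizer need not be unique (e.g.\ duplicate columns of $\AA$), and without sharpness your H\"older-slack argument only bounds progress in terms of the spread of $\vr$, i.e.\ in terms of the width (the range of the entries of $\AA$) — which is precisely the dependence the theorem forbids and the reason the paper abandons a soft MWU-type step in favor of hard thresholding. Even the asserted monotonicity of $\psnorm[p]{\AA\vx^{(t)}}$ under your damped-and-renormalized update is not Boyd's monotonicity (which is for the exact operator $S$) and is left unproved.

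The infeasibility certificate is also not established. Your argument is circular — it assumes the unproven round bound to derive a contradiction — and the promised "row/column rescalings of $\AA$ witnessing $\qpnorm{\AA}<V$" are never constructed. What the certificate actually needs is the statement that for \emph{every} positive $\vx$ some coordinate satisfies $\vq(\vx)_k \geq \qpnorm{\AA}^q$ (Lemma 3.3 of \cite{BV11}, a generalized Collatz--Wielandt bound); you only state the fixed-point/KKT identity at the optimizer, which does not let the algorithm conclude $V>\qpnorm{\AA}$ from observing all potentials below $V^q$ at an arbitrary iterate. Finally, note the statement is about \autoref{alg:qpnorm2} itself, so even with the analytic gaps repaired your proof would certify a different algorithm than the one in the theorem.
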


While the statement of \autoref{thm:intro_qpnorm} concerns approximation of a decision problem, standard techniques can transform an approximation decision algorithm into an approximation algorithm. We refer the reader to \autoref{sec:approximation} for the details which we provide for completeness.

We can use \autoref{thm:intro_qpnorm} for the case $q = p$ to improve the running time of \cite{BV11} by a factor of $\OO(m^3)$. Moreover, if both the load function and the aggregation function are known monotone norms, we provide the first algorithm to compute an optimal oblivious linear routing. Notice that this problem is slightly different from the one solved by \cite{BV11} as we do need knowledge of the load function. Hence the following theorem.

\begin{theorem}[Informal version of \autoref{thm:main_result_ccg}]
    Given a directed or undirected graph $G=(V,E)$, there exists an algorithm to compute an optimal oblivious routing when the cost function is a known monotone norm in time $\OOt(n^6 m^3)$.
\end{theorem}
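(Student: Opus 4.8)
The plan is to express the computation of an optimal oblivious routing as a convex program over the space of routings, equip it with a polynomial-time separation oracle, and solve it with a state-of-the-art cutting-plane method; this generalizes to an arbitrary monotone cost norm the $\ell_\infty$ scheme of Azar--Cohen--Fiat--Kaplan--R\"acke. Encode a routing $\RR$ by a family of unit flows, one for each of the $O(n^2)$ demand pairs, so that a demand $\vd\ge\zeros$ is routed to the edge load $\RR\vd$; that $\RR$ be a valid routing (flow conservation for each basic flow, plus non-negativity on the arcs in the directed case) is a system of linear constraints in the $N=O(n^2 m)$ coordinates of $\RR$. Writing $g$ for the monotone cost norm and $\opt(\vd) = \min\{\,g(\vf) : \vf \text{ routes } \vd\,\}$ for the optimal offline cost -- itself a convex minimum-cost flow -- the competitive ratio of $\RR$ equals
\[
  \CR(\RR) \;=\; \max_{\vd\ge\zeros}\frac{g(\RR\vd)}{\opt(\vd)} \;=\; \max\bigl\{\, g(\RR\vd)\ :\ \vd\ge\zeros,\ \opt(\vd)\le1 \,\bigr\}.
\]
Promoting $\beta$ to a variable, the task is to minimize $\beta$ subject to $\RR$ valid and $g(\RR\vd)\le\beta$ for every $\vd$ in the convex set $\mathcal P := \{\vd\ge\zeros : \opt(\vd)\le 1\}$. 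Since $g(\RR\vd)$ is jointly convex in $(\RR,\vd)$, this is a convex program in the $N+1$ variables $(\RR,\beta)$ -- with a priori infinitely many constraints -- whose optimum $(\RR^\star,\beta^\star)$ certifies that $\RR^\star$ is an optimal oblivious routing, of competitive ratio $\beta^\star$.

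The core of the proof is a polynomial-time separation oracle for this program. Validity is checked directly; for the remaining family one must decide whether $\max_{\vd\in\mathcal P} g(\RR\vd)\le\beta$ and, if not, output a violating $\vd^\star$, at which point the convex constraint $g(\RR'\vd^\star)\le\beta'$ is violated at $(\RR,\beta)$ and one of its subgradients there is the separating hyperplane. Two ingredients make the inner maximization tractable. First, linear-programming duality for the minimum-cost flow writes $\opt(\vd)$ as a maximization over edge-length functions whose dual-$g$-norm is at most one; thus $\mathcal P$ is a convex body with an efficient membership/separation routine (solve the min-cost flow defining $\opt(\vd)$; an optimal dual certificate is a violating length function). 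Second, because edge loads are a \emph{non-negative} linear function of the demand -- flow contributions never cancel when only the aggregate load is charged -- and the routing may be taken accordingly non-negative, the inner problem $\max_{\vd\in\mathcal P} g(\RR\vd)$ is in essence the computation of a $g$-operator norm of a non-negative matrix over $\mathcal P$; this maximum of a convex function over a convex body is NP-hard for a general matrix and a general $\ell_p$ norm (it is precisely the source of hardness of $q\to p$ operator norms), but the combination of non-negativity with the monotonicity of $g$ rescues us: exactly as in the $\ell_\infty$ case, where the maximum decouples over edges into a single row norm, here $g(\RR\vd)$ becomes a pointwise maximum of linear functionals of $\vd\ge\zeros$, so the problem reduces to a polynomially-sized family of linear programs (support functions of $\mathcal P$) when $g$ is polyhedral, and to an oracle-accessed convex program in general. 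This produces $\CR(\RR)$ together with a maximizer $\vd^\star$ whenever $\CR(\RR)>\beta$.

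With the oracle in hand I would feed the convex program -- a large but polynomially-sized linear program when $g$ is polyhedral -- to a modern cutting-plane solver, which on $N$ variables halts after $\OOt(N)$ oracle calls plus $\OOt(N^3)$ further arithmetic; since $N=O(n^2m)$ and the oracle (a combination of shortest-path / min-cost-flow computations with the auxiliary optimization, all of polynomial size) stays within this budget, the running time is $\OOt(N^3)=\OOt(n^6m^3)$, and correctness is immediate from the reformulation. Directed and undirected inputs are treated uniformly by plugging in the appropriate flow polytope and demand/load operator, and degenerate demands (those with $\opt(\vd)=0$, e.g. from a disconnected $G$) are excluded so that $\opt$ is a genuine norm on the demand space. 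The step I expect to be the main obstacle is precisely the second ingredient of the oracle: showing that, for an arbitrary monotone cost norm, $\max_{\vd\in\mathcal P} g(\RR\vd)$ can be evaluated -- and a maximizer extracted -- in polynomial time. This is where the monotonicity of $g$ and the non-negativity of the load operator (together with, for the stated running time, an efficient oracle for $g$) must be exploited in full, extending the elementary per-edge computation that suffices for $\ell_\infty$ and echoing the non-negative-matrix phenomenon underlying \autoref{thm:intro_qpnorm}.
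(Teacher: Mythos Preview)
Your overall architecture matches the paper's: encode the routing with $N = O(n^2 m)$ variables, feed a cutting-plane solver, and read off $\OOt(N^3) = \OOt(n^6 m^3)$. The gap is exactly where you flag it. Your separation oracle must compute $\max_{\vd \in \mathcal{P}} g(\RR\vd)$, a maximisation of a \emph{convex} function over a convex set, and your sketch does not resolve it. Writing $g(\RR\vd) = \max_{\vy \in S_*} \langle \vy, \RR\vd \rangle$ does express the objective as a supremum of linear functionals of $\vd$, but the dual ball $S_*$ need not have polynomially many extreme points, so the ``polynomially-sized family of linear programs'' claim already fails for a general polyhedral $g$; and after swapping the two maxima you are left with maximising the support function of $\mathcal{P}$ over $\vy \in S_*$, which is again a convex maximisation. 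Monotonicity of $g$ and non-negativity of $\RR$ do not by themselves collapse this the way they do in the $\ell_\infty$ case, and ``an oracle-accessed convex program in general'' is not a method here.

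The paper does not build your oracle; it changes the formulation. It dualises the norm, $\norm{\AA\DD} = \max_{\GG \in S_*} \langle \AA, \GG\DD^\top \rangle_F$, and then \emph{merges} the dual certificate $\GG$ with the demand $\DD$ into a single variable $\QQ = \GG\DD^\top$; replacing the non-convex set of such products by its convex hull $\mathcal{Y}$ leaves the (linear in $\QQ$) optimum unchanged. What remains is the bilinear saddle point $\min_{\AA \in \mathcal{X}} \max_{\QQ \in \mathcal{Y}} \langle \AA, \QQ \rangle_F$, with $\mathcal{X}$ an explicit polyhedral relaxation of $\Rlo$ that contains a small ball. This is handed to a convex--concave cutting-plane solver on the $O(n^2 m)$-dimensional product space; for a bilinear $H$ the required subgradients are simply $\QQ$ and $\AA$. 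From the returned near-saddle $\widehat\AA \in \mathcal{X}$ one rounds down entrywise (this is where monotonicity is used) to a genuine routing $\widetilde\OBL \in \Rlo$ with the claimed competitive ratio. In short, the paper's move is not to solve the inner maximisation you isolate, but to lift the whole problem to a bilinear game in which both sides are convex by construction.
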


\subsection{Our Techniques}

We seek to maximize $g : \vx \mapsto \psnorm[p]{\AA \vx}^q/\psnorm[q]{\vx}^q$. Our approach is partly inspired from the long line of work on solvers for positive linear programs~\cite{luby1993parallel,Young01,allen2019nearly,mahoney_et_al:LIPIcs:2016:6333}, as well as certain non-standard instantiations of the same framework in the context of regression problems~\cite{ene2019improved}. Those algorithms are \textit{width-independent}, meaning their running time is at most linear in $\nnz(\AA) + n +m$ (at a cost of a larger dependency on the precision $\epsilon$). By design, those algorithms are different from classical multiplicative weight update algorithms as their running time does not depend on the \textit{width} of the problem.

At each iteration, we scale by $(1+\alpha)$ some coordinates of our iterate to ensure large progress is made. We say that a coordinate is hit at time $t$ if we scale it between iterations $t$ and $t+1$. We do not directly measure the progress on $g$, instead, we separate the progress obtained in the denominator from the progress obtained in the numerator. At each iteration, assuming $\AA$ is non-negative, we want to ensure the ratio between those progresses is larger than $(1-\OO(\epsilon))^q \qpnorm{\AA}^q$ (\autoref{lem:basic_invariant} with $V = \qpnorm[q][p]{\AA}$). This ensures after $T$ iterations that $g(\vx^{(T)})$ is a $(1-\OO(1/\psnorm[q]{\vx^{(T)}}^q))$ approximation of $\qpnorm{\AA}^q$. To do so, we consider a potential function $\vq : \rr^n \rightarrow \rr^n$ (\autoref{def:potentials}) and hit the coordinates whose potential is above a threshold. $\AA$ being non-negative is crucial as it keeps the property that the norm $\vx \mapsto \psnorm[p]{\AA \vx}$ is monotone.

The algorithm converges once $\psnorm[q]{\vx^{(T)}}$ is large. If we were to hit only one coordinate of $\vx^{(t)}$ at iteration $t$ (say the coordinate with largest potential), the running time would depend on the ambient dimension $n$ which would not be a width-independent algorithm. To overcome this issue, we transform our problem into an approximate decision one. Given $V \geq 0$, a guess on $\qpnorm{q}{p}{\AA}$, we either want to find $\vx$ such that $g(\vx) \geq (1-\epsilon)^q V^q$ or certify that $V > \qpnorm[q][p]{\AA}$. We then change the threshold for $\vq$ accordingly. The certification comes from the fact that at each iteration, at least one coordinate of $\vq(\vx)$ should be above $\qpnorm[q][p]{\AA}^q$ (\autoref{lem:infeasibility_full}). Assuming $V \leq \qpnorm[q][p]{\AA}$, there should always be a coordinate with potential larger than $V^q$. 
This modification still does not give a width-independent algorithm, however, $\vq$ behaves ``nicely'' under multiplicative scaling which allows us to upper bound coordinate-wise the value of $\vq(\vx^{(t+1)})/\vq(\vx^{(t)})$ (\autoref{lem:equivalent_potentials_full}).

We use this upper bound to prevent new coordinates from having potential larger than $V^q$. This allows us to guarantee that at iteration $t$, a coordinate with potential larger than $V^q$ also had potential larger than $V^q$ at iterations $0,1, \dots, t-1$. Since its potential was always larger than $V^q$, it was always hit and so it is large. In order to prevent coordinates from having potential larger than $V^q$, we need to also hit coordinates with potential slightly smaller than $V^q$ which is not an issue since we aim for an approximation. This gives a first width-independent algorithm to approximate the $\lqp$-norm of a non-negative matrix in time $\OOt(\nnz(\AA)/q \epsilon)$ or in parallel time $\OOt(1/q \epsilon)$ (\autoref{cor:parallel_running_time}).

The algorithm can also use preconditioning techniques for $\ell_p$ subspace embedding. We first compute a matrix $\AA'$ such that $\psnorm[p]{\AA \vx} = (1 \pm \epsilon) \psnorm[p]{\AA' \vx}$ and $\AA'$ has fewer nonzero entries than $\AA$. Lewis Weight Sampling introduced by Cohen and Peng \cite{CohenP14} and improved for $p > 2$ in \cite{woodruff2022onlinelewisweightsampling} allows to reduce the number of rows of $\AA \in \rr^{m \times n}$ to $\OOt(n/\epsilon^2)$ when $p \leq 2$. This preconditioning technique, paired with the fact that $\qpnorm[q][p]{\AA} = \qpnorm[p^*][q^*]{\AA^T}$ where $1/q + 1/q^* = 1/p + 1/p^* =1$, allows better running time when $m \gg n$ or $n \gg m$. This result is showed in \autoref{cor:algo_qpnorm_better}.

The algorithm is linked with the Perron-Frobenius theorem, in the sense that for $q = p = 2$, if $\AA$ is positive then the maximum of $\vx \mapsto \psnorm[2]{\AA \vx}/\psnorm[2]{\vx}$ is the spectral norm of $\AA$. For positive matrices, \autoref{lem:infeasibility_full} is a generalized version of the Collatz–Wielandt formula which states that $\min_{\vx > \zeros} \max_{i} \frac{\langle \AA_{\sbullet,i}, \AA \vx \rangle}{\vx_i} = \qpnorm[2][2]{\AA}^2$. We generalize this formula to $\ell_p$ norms, and obtain \\ $\min_{\vx > \zeros} \max_{i} \frac{\langle \AA_{\sbullet,i}, (\AA \vx)^{p-1} \rangle}{\vx_i^{p-1}} = \qpnorm[p][p]{\AA}^p$. When $p \neq q$, we just add a scaling factor that depends on $\psnorm[q]{\vx}$ to find the equality.

We then show that an application of our algorithm is to compute a $\OO(\log n)$-competitive linear oblivious routing with a better running time than \cite{BV11}. When the load function is a known monotone norm, we show that we can reduce the problem to finding a saddle point of a bi-linear function.

\subsection{Related Works}

Computing the $\lqp$ norm on general matrices is hard. For arbitrary matrices, there are three known easy cases : $\qpnorm[2][2]{\AA}$, $\qpnorm[q][\infty]{\AA}$ and $\qpnorm[1][p]{\AA}$ (see \cite{Ste05}).

On the hardness front, for arbitrary matrices, computing $\qpnorm{\AA}$ when $1 \leq p < q \leq \infty$ is NP-hard \cite{Ste05}. For $q = p \neq 1,2,\infty$, even if $p$ and the matrix are assumed to have rational entries, it is NP-hard to compute $\qpnorm[p][p]{\AA}$~\cite{HO10}.

If we aim to find an approximation, for arbitrary matrices the problem is also hard to solve to arbitrary precision. For $q \geq p > 2$ (and $2 > q \geq p > 1$), assuming  $\np \notin \dtime(n^{\poly \log (n)})$, the problem cannot be approximated to a factor $2^{(\log n)^{1-\epsilon}}$, for any constant $\epsilon > 0$ \cite{BV11}. \cite{Barak_2012} showed that under the Small-Set Expansion
hypothesis, it is NP-hard to approximate $\qpnorm[2][p]{\AA}$ for arbitrary matrices $\AA$ when $p \geq 4$. Finally, when $2 < q < p < \infty$ (and $1 < q <p < 2$), if $\np \notin \bptime(2^{(\log n)^{\OO(1)}})$, it is NP-hard to approximate $\qpnorm{\AA}$ to a factor $2^{(\log n)^{1-\epsilon}}$ for any constant $\epsilon > 0$ \cite{BG19}. 

For non-negative matrices, \cite{BOYD197495} used a power iteration-type algorithm to approximate $p$-norm of non-negative matrices without any bounds on the time of convergence. When $q \geq p \geq 1$, \cite{BV11} provided the analysis and proved that the power iteration-type algorithm form \cite{BOYD197495} converges in time $\OOt(\frac{n(n+m)^2}{\epsilon}\nnz(\AA))$. Under the same settings, \cite{Ste05} demonstrated that the problem is equivalent to maximizing a concave function ($\vx \mapsto \psnorm[p]{\AA \vx^{\frac{1}{q}}}$) over the unit simplex. When $q < p$, no polynomial time algorithm is known even for non-negative matrices.

The non-negative assumption may be necessary to make the problem easy. Indeed, it is possible to relax MAXCUT into approximating the $\ell_{\infty \rightarrow 1}$ norm of a symmetric positive semi-definite matrix (see \cite{Ste05}). Hence, approximating $\qpnorm[\infty][1]{\AA}$ is NP-hard even when $\AA$ is positive semi-definite and the precision sought is fixed. However, we can easily extend the results on non-negative matrices to matrices $\AA$ such that there exist two diagonal matrices $\LL \in \rr^{m \times m}$ and $\RR \in \rr^{n \times n}$ with $\pm 1$ entries on their diagonals such that $\LL \AA \RR$ is non-negative \cite{Ste05}. Indeed, given $\LL,\RR$, we have that for any $\vx$, $\psnorm[p]{\AA \vx} = \psnorm[p]{\LL \AA \vx}$. Hence, we can apply the algorithm on $\LL \AA \RR$ and consider $\RR \vx$ where $\vx$ is a $(1-\epsilon)$-approximation of $\qpnorm{\LL \AA \RR}$.

\section{Preliminaries}
\label{sec:preliminaries}

\subsection{Notations}

We use lowercase bold letters for vectors and uppercase for matrices. For a vector $\vz$ (resp. a matrix $\AA$), $\vz^T$ (resp. $\AA^T$) denotes the transpose of $\vz$ (resp. of $\AA$). For an integer $i$, $\AA_i$ denotes the $i$\textsuperscript{th} row of $\AA$, and $\AA_{\sbullet,i}$ the $i$\textsuperscript{th} column. We say that a vector or a matrix is non-negative if all its entries are non-negative.  $\langle., . \rangle$ is the usual inner product and $\langle.,. \rangle_F$ denotes the Frobenius inner product. For $p \geq 1$ and a vector $\vz$, we denote by $\psnorm[p]{\vz}$ the $\lp$-norm of $\vz$ defined as
\begin{equation*}
    \psnorm[p]{\vz} := \left(\sum_{i} \abs{\vz_i}^p\right)^{\frac{1}{p}}.
\end{equation*}
For $q,p \geq 1$ and a matrix $\AA$, we denote by $\qpnorm{\AA}$ the $\lqp$-norm of $\AA$ defined as
\begin{equation*}
    \qpnorm{\AA} := \max_{\vz \neq \zeros} \frac{\psnorm[p]{\AA \vz}}{\psnorm[q]{\vz}}.
\end{equation*}

$\zeros$ (resp. $\ones$) denotes the vector with all entries equal to $0$ (resp. $1$). For a vector $\vz$ and a real $q$, $\vz^q$ is the vector where the exponent is applied coordinate wise. For a matrix $\AA$, $\nnz(\AA)$ denotes the number of nonzero elements in $\AA$. We denote by $\Delta_n$ the non-negative simplex $\Delta_n := \lbrace \vz \in \rr^n_{\geq 0} : \sum_i \vz_i = 1 \rbrace$.

\subsection{Convex Optimization}

We would like to emphasize that the problem of maximizing $f(\vx) = \psnorm[p]{\AA \vx}$ on the unit $\lq$ ball corresponds to maximizing a convex function over a convex domain which can not be solved using classical convex optimization tools. Nevertheless, when $\AA$ is non-negative, $q \geq p \geq 1$, we can show that by composing $f$ with a simple concave mapping from the unit simplex $\Delta_n$ to the unit $\lq$ sphere we obtain a concave function that we would like to maximize over a convex domain \cite{Ste05}. To intuit, one should remark that the maximum of $f$ is achieved by a vector in the positive orthant of the unit $\lq$ sphere which we call $S_q$. Moreover, $S_q$ is not a convex set. In fact, for two different vectors $\vx, \vy \in S_q$, and $0 < t < 1$, we have $\psnorm[q]{t \vx + (1-t) \vy} < 1$, hence we are ``losing'' norm by taking the convex combination of two points. That means that even if $f$ is convex, we have the potential to ``increase'' the value of $f$ between any two points of $S_q$ by taking a path in $S_q$. This new function has better chance to be concave than $f$ since its value between two points of $S_q$ is larger. One can remark that if $\vu \in \Delta_n$, then $\vu^\frac{1}{q} \in S_q$. The mapping $g : \vu \mapsto \vu^\frac{1}{q}$ describes such a path since for any $t \in [0,1]$ and $\vu, \vv \in \Delta_n$, we have that $t g(\vu) + (1-t) g(\vv) \in S_q$. As expected, $g$ is entry-wise concave and calculus gives that as long as $q \geq p$, $\vu \mapsto f(g(\vu)) = \psnorm[p]{\AA \vu^{\frac{1}{q}}}$ is concave. We give the proof in \autoref{apps:concave_norm}.

\begin{theorem}[Remark 3.4 from \cite{Ste05}]
    \label{thm:concave_norm}
    Given $\AA \in \rr^{m \times n}_{\geq 0}$, and $q \geq p \geq 1$. We have $f : \vx \mapsto \psnorm[p]{\AA \vx^{\frac{1}{q}}}$ is concave on $\Delta_n$. 
\end{theorem}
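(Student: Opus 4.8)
The plan is to reduce the claim, through a short chain of elementary reductions, to the single classical fact that for $0 < r \leq 1$ the map $\vz \mapsto \left(\sum_j \vz_j^{r}\right)^{1/r}$ is concave on $\rr^n_{\geq 0}$, and then to reassemble everything using two standard rules: a concave, non-decreasing function composed with a concave function is concave, and precomposing a concave function with a linear map preserves concavity. Throughout, non-negativity of $\AA$ lets us drop absolute values on $\Delta_n$ and, more importantly, keep the substitutions below inside $\rr^n_{\geq 0}$.

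First I would observe that it suffices to prove $F := f^p$ is concave (on $\Delta_n$, or indeed on all of $\rr^n_{\geq 0}$, which contains $\Delta_n$): since $p \geq 1$, the map $t \mapsto t^{1/p}$ is concave and non-decreasing on $\rr_{\geq 0}$, so $f = F^{1/p}$ inherits concavity from $F \geq 0$. Next, writing $F(\vx) = \sum_{i=1}^m \left(\sum_{j=1}^n \AA_{ij}\vx_j^{1/q}\right)^p$, it is enough to prove each summand $\phi_i(\vx) := \left(\sum_j \AA_{ij}\vx_j^{1/q}\right)^p$ is concave, since a sum of concave functions is concave. Fixing a row $i$ and using $\AA_{ij} \geq 0$, I would rewrite $\AA_{ij}\vx_j^{1/q} = (\AA_{ij}^q\,\vx_j)^{1/q}$, so that $\phi_i(\vx) = \Psi(\vu)$, where $\vu_j := \AA_{ij}^q\,\vx_j$ is a non-negative diagonal scaling of $\vx$ (hence linear, and mapping $\rr^n_{\geq 0}$ into itself) and $\Psi(\vu) := \left(\sum_j \vu_j^{1/q}\right)^p$ on $\rr^n_{\geq 0}$. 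Since precomposition with a linear map preserves concavity, it remains only to show $\Psi$ is concave.

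For the last step I would write $\Psi(\vu) = \rho(\vu)^{p/q}$ with $\rho(\vu) := \left(\sum_j \vu_j^{1/q}\right)^q$. The concavity of $\rho$ on $\rr^n_{\geq 0}$ is exactly the $r = 1/q \in (0,1]$ instance of the fact above; it can be proved by noting $\rho$ is positively homogeneous of degree $1$, so its concavity is equivalent to superadditivity $\rho(\vu + \vu') \geq \rho(\vu) + \rho(\vu')$, which is the reverse Minkowski inequality for exponent $1/q \leq 1$ (equivalently, $\rho$ is an infimum of linear functionals via reverse Hölder). Since $q \geq p \geq 1$ forces $0 < p/q \leq 1$, the map $t \mapsto t^{p/q}$ is concave and non-decreasing on $\rr_{\geq 0}$, so $\Psi = \rho^{p/q}$ is concave by the composition rule, and unwinding the reductions gives concavity of $f$ on $\Delta_n$.

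I do not expect a genuine obstacle: every reduction is one line, and the hypothesis $q \geq p \geq 1$ is used exactly three times, namely to place $1/p$, $p/q$, and $r = 1/q$ all in the "concave" regime $(0,1]$. The one point worth flagging is that the naive route — concavity of $t \mapsto t^{1/q}$ together with monotonicity of $\psnorm[p]{\cdot}$ on non-negative vectors — is \emph{not} enough: after using those two facts one is left needing a lower bound on $\psnorm[p]{t\AA\vx^{1/q} + (1-t)\AA\vy^{1/q}}$, but the triangle inequality only gives an upper bound, so the deficit must be absorbed by the row-wise exponent juggling above. The only non-bookkeeping ingredient is therefore the concavity of $\rho$, i.e. the sub-$1$ $\ell_r$ (reverse Minkowski) inequality, which is where I would be most careful to track the hypotheses.
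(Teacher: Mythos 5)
Your argument is correct and follows essentially the same route as the paper's proof: reduce to $f^p$, split into rows, absorb the entries $\AA_{ij}$ into the variable via the substitution $\AA_{ij}\vx_j^{1/q} = (\AA_{ij}^q\vx_j)^{1/q}$, establish concavity of $\vu \mapsto \bigl(\sum_j \vu_j^{1/q}\bigr)^q$ via the reverse Minkowski (superadditivity) inequality, and finish by composing with $t\mapsto t^{p/q}$ and $t\mapsto t^{1/p}$. Your explicit ``$1$-homogeneous plus superadditive implies concave'' step is exactly the paper's inline computation in its auxiliary lemma, so the two proofs coincide up to presentation.
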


This shows that a simple re-parametrization of the problem turns it into a concave maximization problem, which can be efficiently solved using a cutting plane method~\cite{JLLS23GLM}, leading to a running time of $\widetilde{O}(n\cdot \nnz(\AA) + n^3)$. However, as cubic runtime may turn out to be prohibitive, we will focus on more efficient algorithms at the expense of a linear dependence in the error tolerance of the provided solution.

\subsection{Power Iteration}

The previous state-of-the-art algorithm was provided in \cite{BV11}. Let
\begin{equation*}
    f(\vx) = \frac{\psnorm[p]{\AA \vx}}{\psnorm[q]{\vx}}.
\end{equation*}
In order to compute $\qpnorm{\AA}$, one needs to maximize $f$ over $\rr^n$. If we assume $\AA$ is non-negative, then \cite{BV11} provide an algorithm with running time $\OOt(\frac{n(n+m)^2}{\epsilon} \nnz(\AA))$. To find the maximum of $f$, we can compute its gradient $\nabla f$. We have 
\begin{equation*}
    \frac{\partial f}{\partial \vx_i} = \frac{\psnorm[q]{\vx} \psnorm[p]{\AA \vx}^{1-p} \langle \AA_{\sbullet,i}, \abs{\AA \vx}^{p-1} \rangle - \psnorm{\AA \vx} \psnorm[q]{\vx}^{1-q} \abs{\vx_i}^{q-1}}{\psnorm[q]{\vx}^2}.
\end{equation*}
At optimality, $\nabla f = \zeros$ which can be written as 
\begin{equation*}
    \abs{\vx}^{q-1} = \frac{\psnorm[q]{\vx}^q}{\psnorm[p]{\AA \vx}^p} \AA^T \abs{\AA \vx}^{p-1}.
\end{equation*}
Since $\AA$ is non-negative, we know there is a maximum $\vx$ with non-negative coordinates. \cite{BV11} then define the operator $S : \rr^n_{\geq 0} \rightarrow \rr^n_{\geq 0}$ such that $S(\vx) = \left(\AA^T (\AA \vx)^{p-1}\right)^{\frac{1}{q-1}}$. It is easy to show that if $\vx$ is such that $S(\vx) \propto \vx$, then $\vx$ is a critical point and $\qpnorm{\AA} = \frac{\psnorm[p]{\AA \vx}}{\psnorm[q]{\vx}}$. Consider the two potentials $m(\vx) := \min_i S(\vx)_i/\vx_i$ and $M(\vx) = \max_i S(\vx)_i/\vx_i$, \cite{BV11} showed the following lemma. 
\begin{lemma}[Lemma 3.3, \cite{BV11}]
    \label{lem:potentials}
    For any non-negative matrix $\AA$ and positive vector $\vx$, we have 
    \begin{equation*}
        m(\vx)^{q-1} \leq \frac{\psnorm[p]{\AA \vx}^p}{\psnorm[q]{\vx}^q} \leq \qpnorm{\AA}^p \psnorm[q]{\vx}^{p-q} \leq M(\vx)^{q-1}.
    \end{equation*}
\end{lemma}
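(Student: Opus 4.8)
The plan is to prove the three inequalities separately. The first two are elementary; the third -- a nonlinear analogue of one direction of the Collatz--Wielandt variational formula -- is the real content. All three rest on the identity
\[
  \psnorm[p]{\AA\vx}^p = \langle \vx,\, \AA^T(\AA\vx)^{p-1}\rangle = \sum_i \vx_i\, S(\vx)_i^{q-1},
\]
obtained by writing $(\AA\vx)_j^p = (\AA\vx)_j\,(\AA\vx)_j^{p-1}$ and exchanging the order of summation (equivalently, it is Euler's relation for the $p$-homogeneous map $\vx\mapsto\psnorm[p]{\AA\vx}^p$). Here $\AA\geq\zeros$ and $\vx>\zeros$ ensure every entry of $\AA\vx$ is non-negative, so no absolute values are needed.

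For the first inequality, $S(\vx)_i\geq m(\vx)\vx_i$ for every $i$ together with $q-1\geq 0$ gives $S(\vx)_i^{q-1}\geq m(\vx)^{q-1}\vx_i^{q-1}$; substituting into the identity yields $\psnorm[p]{\AA\vx}^p\geq m(\vx)^{q-1}\psnorm[q]{\vx}^q$, which is the claim after dividing by $\psnorm[q]{\vx}^q$. The second inequality is just the definition of the norm: $\psnorm[p]{\AA\vx}\leq\qpnorm{\AA}\psnorm[q]{\vx}$, hence $\psnorm[p]{\AA\vx}^p\leq\qpnorm{\AA}^p\psnorm[q]{\vx}^p$, and dividing by $\psnorm[q]{\vx}^q$ gives the middle bound. (The same manipulation with $S(\vx)_i\leq M(\vx)\vx_i$ gives only $\psnorm[p]{\AA\vx}^p/\psnorm[q]{\vx}^q\leq M(\vx)^{q-1}$, which is weaker than, and does not imply, the third inequality.)

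For the third inequality, note first that both sides scale identically under $\vx\mapsto c\vx$: from $S(c\vx)=c^{(p-1)/(q-1)}S(\vx)$ one gets $M(c\vx)^{q-1}=c^{p-q}M(\vx)^{q-1}$, matching $\psnorm[q]{c\vx}^{p-q}$. So it suffices to prove $\qpnorm{\AA}^p\leq M(\vx)^{q-1}$ under $\psnorm[q]{\vx}=1$. Put $\vu=\vx^q$, so $\vu$ lies in the interior of $\Delta_n$ and $\vx=\vu^{1/q}$. The key input is that $H(\vu):=\psnorm[p]{\AA\vu^{1/q}}^p=\sum_j\big(\sum_i\AA_{ji}\vu_i^{1/q}\big)^p$ is concave on $\rr^n_{\geq 0}$, a strengthening of \autoref{thm:concave_norm}: each summand works because $\sum_i\AA_{ji}\vu_i^{1/q}$ is concave (a non-negative combination of the concave coordinate functions $\vu\mapsto\vu_i^{1/q}$), non-negative, and $(1/q)$-homogeneous; such a function raised to the $q$-th power is again concave (the $q$-th power is $1$-homogeneous with convex superlevel sets, hence concave), and the $p$-th power equals that to the power $p/q\leq 1$, still concave. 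Now $H$ is $(p/q)$-homogeneous, so Euler gives $\langle\vu,\nabla H(\vu)\rangle=\tfrac pq H(\vu)$, while a direct differentiation gives $\partial_k H(\vu)=\tfrac pq\big(S(\vx)_k/\vx_k\big)^{q-1}$, so $\max_k\partial_k H(\vu)=\tfrac pq M(\vx)^{q-1}$. Since $\max_{\vv\in\Delta_n}H(\vv)=\big(\max_{\psnorm[q]{\vz}=1}\psnorm[p]{\AA\vz}\big)^p=\qpnorm{\AA}^p$ (the optimizer is non-negative because $\AA\geq\zeros$), the first-order concavity inequality for $H$ at the interior point $\vu$, evaluated at a maximizer $\vv\in\Delta_n$ and using $\max_{\vv'\in\Delta_n}\langle\nabla H(\vu),\vv'\rangle=\max_k\partial_k H(\vu)$, gives
\begin{align*}
  \qpnorm{\AA}^p = H(\vv) &\leq H(\vu)+\langle\nabla H(\vu),\vv-\vu\rangle \\
  &\leq H(\vu)+\max_k\partial_k H(\vu)-\tfrac pq H(\vu)=\big(1-\tfrac pq\big)H(\vu)+\tfrac pq M(\vx)^{q-1}.
\end{align*}
Finally $H(\vu)=\psnorm[p]{\AA\vx}^p\leq\qpnorm{\AA}^p$ since $\psnorm[q]{\vx}=1$, so this rearranges to $\tfrac pq\qpnorm{\AA}^p\leq\tfrac pq M(\vx)^{q-1}$, i.e. $\qpnorm{\AA}^p\leq M(\vx)^{q-1}$; undoing the normalization (apply to $\vx/\psnorm[q]{\vx}$ and use the scaling of $M$) yields $\qpnorm{\AA}^p\psnorm[q]{\vx}^{p-q}\leq M(\vx)^{q-1}$.

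The main obstacle is making the third inequality tight. Pairing $S(\vx)$ against a maximizer $\vx^{*}$ of $\vy\mapsto\psnorm[p]{\AA\vy}/\psnorm[q]{\vy}$ -- the natural ``left eigenvector'' move -- succeeds only for $p=q=2$, where it reduces to Collatz--Wielandt for the non-negative matrix $\AA^T\AA$; for $p\neq 2$ it fails because $\langle\AA\vx^{*},(\AA\vx)^{p-1}\rangle$ is no longer bounded below by $\qpnorm{\AA}^p$. Similarly, running the first-order argument on the norm $\psnorm[p]{\AA\vu^{1/q}}$ rather than on its $p$-th power $H$ only yields a bound of the form $M(\vx)^{q-1}\geq\qpnorm{\AA}^p\,\phi\!\big(\psnorm[p]{\AA\vx}/(\qpnorm{\AA}\psnorm[q]{\vx})\big)$ with $\phi(1)=1$ but $\phi(t)\to 0$ as $t\to 0$, which degrades whenever $\vx$ is far from a maximizer and is not recovered by any normalization. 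So one genuinely needs concavity of the $p$-th power $H$, and the homogeneity bookkeeping must be exact -- it is precisely $p/q\leq 1$ that makes the $H(\vu)$ term absorbable. The remaining care is with degenerate cases: $q=1$ (where $S$ is undefined, so tacitly $q>1$), vanishing rows or columns of $\AA$, and $\vx$ approaching $\partial\Delta_n$ (where $M(\vx)\to\infty$), all controlled by the hypotheses $\vx>\zeros$ and $q>1$.
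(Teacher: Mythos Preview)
The paper does not supply its own proof of this lemma; it is quoted from \cite{BV11} in the preliminaries (and the same external result is cited again as \autoref{lem:infeasibility_full}). So there is no in-paper argument to compare against directly.

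Your proof is correct. The first two inequalities are handled cleanly via the identity $\psnorm[p]{\AA\vx}^p=\langle\vx,\AA^T(\AA\vx)^{p-1}\rangle$ and the definition of $\qpnorm{\AA}$. For the third inequality, your key move---reparametrize by $\vu=\vx^q\in\Delta_n$ and use concavity of $H(\vu)=\psnorm[p]{\AA\vu^{1/q}}^p$---is exactly the content the paper \emph{does} prove, as \autoref{lem:thm2_aux} in the appendix (each summand $\langle\AA_j,\vu^{1/q}\rangle^p$ is concave, hence so is $H$); your ``$1$-homogeneous with convex superlevel sets'' justification is the quasiconcavity-plus-homogeneity route to the same Minkowski step. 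From there the first-order argument is tight: Euler gives $\langle\vu,\nabla H(\vu)\rangle=\tfrac{p}{q}H(\vu)$, the gradient computation gives $\max_k\partial_kH(\vu)=\tfrac{p}{q}M(\vx)^{q-1}$, and plugging $H(\vu)\leq\qpnorm{\AA}^p$ into the concavity inequality absorbs the $(1-\tfrac{p}{q})H(\vu)$ term precisely because $p/q\leq 1$. The scaling reduction to $\psnorm[q]{\vx}=1$ is checked correctly. In short, you have supplied a self-contained proof that the paper outsources, and the only nontrivial ingredient you need is one the paper already establishes in \autoref{thm:concave_norm}/\autoref{lem:thm2_aux}.
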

At optimum, $m(\vx^*)^{q-1} = M(\vx^*)^{q-1} = \psnorm[p]{\AA \vx^*}^p/\psnorm[q]{\vx^*}^q$. Hence, the goal is to reduce the ratio $M(\vx)/m(\vx)$. \cite{BV11} showed that iteratively applying $S$ such that $\vx^{(t+1)} = S(\vx^{(t)})$ outputs $\vx^{(T)}$ a $(1-\epsilon)$-approximation when $T = \OOt(\frac{n(n+m)^2}{\epsilon})$. Hence, the total running time is $\OOt(\frac{n(n+m)^2}{\epsilon} \nnz(\AA))$. In this paper, we provide a much simpler algorithm with running time $\OOt(\nnz(\AA)/q\epsilon)$.

\subsection{Lewis Weight Sampling}

Cohen and Peng introduced in \cite{CohenP14} a fast algorithm to compute a matrix $\AA'$ such that $\psnorm[p]{\AA \vx} \approx_{1+\epsilon} \psnorm[p]{\AA' \vx}$ for any vector $\vx$. The algorithm computes $\AA'$, a matrix containing few rescaled rows of $\AA$. The algorithm is of particular interest in our situation when $\AA \in \rr^{m \times n}$ is such that $m \gg n$. For $1 \leq p \leq 2$, $\OOt(\frac{n}{\epsilon^2})$ rows are sufficient. Moreover, the cost of such an algorithm has only time complexity $\OOt(\nnz(\AA) + n^{\omega})$ where $\omega$ is the time complexity of matrix multiplication. Formally, this gives the following theorem. 

\begin{theorem}[Theorem 1.3 and A.2 from \cite{woodruff2022onlinelewisweightsampling}]
    \label{thm:lewis_sampling}
    Given a matrix $\AA \in \rr^{m \times n}$ and $p \geq 1$, one can compute $\AA'$ such that with probability over $1-\delta$, we have
    \begin{equation*}
        \forall \vx \in \rr^n, \left(1-\epsilon\right) \psnorm[p]{\AA' \vx} \leq \psnorm[p]{\AA \vx} \leq \left(1+\epsilon\right) \psnorm[p]{\AA' \vx}.
    \end{equation*}
    in time $\OOt(\nnz(\AA) + n^{\omega} + \frac{n^{\max \lbrace 1, p/2 \rbrace}}{\epsilon^2})$. Moreover, $\AA'$ has at most 
    \begin{equation*}
        \OO\left(\frac{n^{\max \lbrace 1, p/2 \rbrace}}{\epsilon^2} \left((\log n)^2 \log m + \log \frac{1}{\delta}\right)\right)
    \end{equation*}
    rows.
\end{theorem}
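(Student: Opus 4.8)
The statement bundles a \emph{sampling} guarantee together with a fast \emph{algorithm} for the sampling probabilities, and I would establish each separately. The probabilities are the $\ell_p$ \emph{Lewis weights} of $\AA$: the unique $\vw \in \rr^m_{>0}$ satisfying
\begin{equation*}
    \vw_i = \left(\AA_i \left(\AA^T \bm{W}^{1-2/p}\AA\right)^{-1}\AA_i^T\right)^{p/2}, \qquad \bm{W} := \mathrm{diag}(\vw),
\end{equation*}
equivalently the vector whose $i$-th entry is the $\ell_2$ leverage score of row $i$ of $\bm{W}^{1/2-1/p}\AA$; in particular $\sum_i \vw_i \le n$. Existence and uniqueness follow from a standard contraction/convexity argument that I would recall. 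Given a target row count $N$, the matrix $\AA'$ is produced by independent sampling: keep row $i$ with probability $\rho_i := \min\{1, N\vw_i/n\}$ and, if kept, rescale it by $\rho_i^{-1/p}$, so that $\psnorm[p]{\AA'\vx}^p$ is an unbiased estimator of $\psnorm[p]{\AA\vx}^p$ for every fixed $\vx$ and $\AA'$ has $\OO(N)$ rows in expectation (a Chernoff bound turns this into the claimed high-probability row count, contributing the $\log(1/\delta)$ term).

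\emph{Step 1 (sampling).} I would show $N = \OOt\big(n^{\max\{1,p/2\}}\epsilon^{-2}\big)$ suffices for $\psnorm[p]{\AA'\vx}^p = (1\pm\epsilon)\psnorm[p]{\AA\vx}^p$ to hold simultaneously over all $\vx$. By homogeneity it is enough to bound $\sup\big\{\,\big|\psnorm[p]{\AA'\vx}^p - \psnorm[p]{\AA\vx}^p\big| : \psnorm[p]{\AA\vx}=1\,\big\}$; after symmetrization this is a Rademacher process on the compact set $\{\AA\vx : \psnorm[p]{\AA\vx}\le 1\}$. The Lewis-weight rescaling gives the crucial \emph{flatness} estimate: a kept, rescaled summand $\rho_i^{-1}|\AA_i\vx|^p$ is at most $(n/N)\cdot|\AA_i\vx|^p/\vw_i$, where $|\AA_i\vx|^p/\vw_i \le \psnorm[p]{\AA\vx}^p$ for $p\le 2$ and $\le n^{p/2-1}\psnorm[p]{\AA\vx}^p$ for $p>2$. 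For $p\le 2$ a Dudley chaining integral over the Lewis ellipsoid of the embedded subspace gives $N=\OOt(n\epsilon^{-2})$. For $p>2$ the summands are heavy-tailed; one peels $\AA\vx$ into a spread part and its $\OO(n^{p/2-1})$ heaviest coordinates, bounds the former by Bernstein and the latter by a union bound over a net whose log-cardinality is $\OOt(n^{p/2})$ at the relevant scale --- the $n^{p/2}$-dimensional ``$\ell_p$ John ellipsoid'' phenomenon --- forcing $N=\OOt(n^{p/2}\epsilon^{-2})$. Finally I would check the argument tolerates replacing $\vw$ by a constant-factor (indeed $(1\pm\epsilon)$-accurate) approximation, at the cost of only a constant/polylogarithmic blow-up in $N$.

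\emph{Step 2 (computing the weights).} To realize the running time I would (i) reduce $m$ to $\poly(n)$ in $\OOt(\nnz(\AA))$ time by a cheap, $\poly(n)$-distortion row compression (a sparse oblivious $\ell_p$-embedding for $p\le 2$, or one round of crude Lewis-weight sampling for larger $p$), after which the surviving $\widetilde{\AA}$ has $\poly(n)$ rows and Lewis weights within a $\poly(n)$ factor of those of $\AA$; (ii) on $\widetilde{\AA}$, run the fixed-point iteration that updates $\vw^{(t+1)}_i$ to an approximate $\ell_2$ leverage score of row $i$ of $(\bm{W}^{(t)})^{1/2-1/p}\widetilde{\AA}$, computed via a Johnson--Lindenstrauss sketch plus $\OOt(1)$ solves of $n\times n$ systems, which contracts geometrically for $p<4$ by the Cohen--Peng analysis \cite{CohenP14} and is handled for all $p$ in \cite{woodruff2022onlinelewisweightsampling}; each iteration costs $\OOt(\nnz(\widetilde{\AA})+n^\omega)=\OOt(n^\omega)$, and $\OOt(1)$ iterations yield constant-factor weights (enough for Step 1 after inflating $N$), $\OO(\log(1/\epsilon))$ yield $(1\pm\epsilon)$ weights; (iii) perform the sample of Step 1, whose size and handling cost $\OOt(n^{\max\{1,p/2\}}\epsilon^{-2})$. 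Summing the three costs gives $\OOt(\nnz(\AA)+n^\omega+n^{\max\{1,p/2\}}\epsilon^{-2})$, and the row bound is exactly the $N$ of Step 1 with $\log(1/\delta)$ tracked through the Chernoff/union bounds.

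\emph{Main obstacle.} The delicate point is Step 1 for $p>2$ with the \emph{optimal} $\epsilon^{-2}$ dependence (a naive net argument loses extra powers of $\epsilon^{-1}$): this needs a careful multi-level chaining/peeling that uses Lewis-weight flatness to separate the bulk from the $\OO(n^{p/2-1})$ heaviest directions and applies the matching tail bound to each piece. A secondary difficulty is keeping the weight computation simultaneously fast ($\nnz(\AA)+n^\omega$) and accurate, since the contraction rate of the fixed-point map degrades as $p$ grows --- precisely where the refined iteration of \cite{woodruff2022onlinelewisweightsampling} is needed rather than the plain Cohen--Peng scheme.
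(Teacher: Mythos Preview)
This statement is not proved in the paper at all: it is quoted verbatim from \cite{woodruff2022onlinelewisweightsampling} (and \cite{CohenP14}) and used purely as a black box in the preliminaries and in the proof of \autoref{cor:algo_qpnorm_better}. There is therefore no ``paper's own proof'' to compare your proposal against. Your outline is a reasonable high-level sketch of how the cited Lewis-weight-sampling results are actually established --- the definition of Lewis weights, the unbiased-estimator sampling, the chaining argument for $p\le 2$, the level-set/peeling argument for $p>2$, and the fixed-point iteration for computing the weights --- but none of that belongs in this paper; here the theorem is simply imported.

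If you were asked to supply a proof for this paper, the correct move is a one-line citation, not a reconstruction of the Lewis-weight literature.
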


Since the algorithm just samples and rescales rows of $\AA$, if $\AA$ is non-negative, then so is $\AA'$. Hence, it makes Lewis Weight Sampling a possible preconditioning of the input.

\section{Efficiently Computing Induced \texorpdfstring{$\lqp$}{L q to p}-Norms of Non-Negative Matrices}
\label{sec:lp_norm_algo}

\subsection{A Simple \texorpdfstring{$\OOt\left(\frac{n}{q \epsilon} \nnz(\AA)\right)$}{n/q eps nnz(AA)} Algorithm}

We want to solve the following approximate decision problem, which we show is sufficient to solve the approximate optimization problem in \autoref{sec:approximation}. 

\begin{definition}
    Given $V \geq 0$, we solve the approximate decision problem if we either provide $\vx$ such that 
    \begin{equation*}
        \frac{\psnorm[p]{\AA \vx}}{\psnorm[q]{\vx}} \geq (1-\epsilon)V,
    \end{equation*} 
    or certify that $\qpnorm{\AA} < V$.
\end{definition}

We seek to maximize a function $f$ of the form $f : \vx \mapsto \frac{f_n(\vx)}{f_d(\vx)}$ where both $f_n$ and $f_d$ are convex functions. Hence, we have 
\begin{equation*}
    \frac{f_n(\vx + \vdelta) - f_n(\vx)}{f_d(\vx + \vdelta) - f_d(\vx)} \geq \frac{\langle \nabla f_n(\vx), \vdelta \rangle}{\langle \nabla f_d(\vx + \vdelta), \vdelta \rangle}.
\end{equation*}
If we take $f_d : \vx \mapsto \psnorm[q]{\vx}$, we have $\nabla f_d(\vx + \vdelta) = \left(\frac{\vx+ \vdelta}{\psnorm[q]{\vx+  \vdelta}}\right)^{q-1}$ which is not optimal to upper bound by $\nabla f(\vx)$. Instead, we consider $g_{d} : \vx \mapsto \psnorm[q]{\vx}^q$ and we want to maximize $g : \vx \mapsto \left(\frac{\psnorm[p]{\AA \vx}}{\psnorm[q]{\vx}}\right)^q$. We now obtain a tighter bound 
\begin{equation*}
    \frac{\psnorm[p]{\AA (\vx + \vdelta)}^q - \psnorm[p]{\AA \vx}^q}{\psnorm[q]{\vx+ \vdelta}^q - \psnorm[q]{\vx}^q} \geq \frac{\langle \nabla g_n(\vx), \vdelta \rangle}{\langle \nabla g_d(\vx + \vdelta), \vdelta \rangle} = \psnorm[p]{\AA \vx}^{q-p} \frac{\langle \AA^T (\AA \vx)^{p-1}, \vdelta \rangle}{\langle \left( \vx+ \vdelta \right)^{q-1}, \vdelta \rangle}.
\end{equation*}
This bound will be used to design an iterate that satisfies the following lemma. We prove \autoref{lem:basic_invariant} in \autoref{apps:basic_invariant}.
\begin{lemma}
    \label{lem:basic_invariant}
    Given a matrix $\AA \in \rr^{m \times n}$, $\vx^{(0)} \in \rr^n$ such that $\psnorm[q]{\vx^{(0)}} = 1$, $0 < q,p$, $0 < \epsilon \leq 2/q$ and $V \geq 0$. Assume at each iteration we have
    \begin{equation*}
        \frac{\psnorm[p]{\AA \vx^{(t+1)}}^q - \psnorm[p]{\AA \vx^{(t)}}^q}{\psnorm[q]{\vx^{(t+1)}}^q - \psnorm[q]{\vx^{(t)}}^q} \geq \left(\left(1-\frac{\epsilon}{2}\right)V\right)^q,
    \end{equation*}
    then, once $\psnorm[q]{\vx^{(T)}}^q \geq \frac{4}{q \epsilon}$, we have 
    \begin{equation*}
        \frac{\psnorm[p]{\AA \vx^{(T)}}}{\psnorm[q]{\vx^{(T)}}} \geq (1-\epsilon)V.
    \end{equation*}
\end{lemma}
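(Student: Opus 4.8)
The plan is to track the two sequences $N_t := \psnorm[p]{\AA \vx^{(t)}}^q$ and $D_t := \psnorm[q]{\vx^{(t)}}^q$, using the hypothesis $N_{t+1} - N_t \geq c^q (D_{t+1} - D_t)$ with $c := (1-\epsilon/2)V$. Summing this telescoping inequality from $t=0$ to $T-1$ gives $N_T - N_0 \geq c^q (D_T - D_0)$, i.e. $N_T \geq c^q D_T - c^q D_0 + N_0 \geq c^q D_T - c^q D_0$ since $N_0 \geq 0$. With $D_0 = \psnorm[q]{\vx^{(0)}}^q = 1$, this reads $N_T \geq c^q (D_T - 1)$. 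Dividing by $D_T$ yields
\begin{equation*}
    \frac{\psnorm[p]{\AA \vx^{(T)}}^q}{\psnorm[q]{\vx^{(T)}}^q} = \frac{N_T}{D_T} \geq c^q \left(1 - \frac{1}{D_T}\right) = \left(1-\frac{\epsilon}{2}\right)^q V^q \left(1 - \frac{1}{\psnorm[q]{\vx^{(T)}}^q}\right).
\end{equation*}

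Next I would use the stopping condition $\psnorm[q]{\vx^{(T)}}^q \geq 4/(q\epsilon)$ to control the $(1 - 1/D_T)$ factor: we have $1 - 1/D_T \geq 1 - q\epsilon/4$. Taking $q$-th roots of the displayed inequality gives
\begin{equation*}
    \frac{\psnorm[p]{\AA \vx^{(T)}}}{\psnorm[q]{\vx^{(T)}}} \geq \left(1-\frac{\epsilon}{2}\right) V \left(1 - \frac{q\epsilon}{4}\right)^{1/q}.
\end{equation*}
So the remaining task is purely to verify the elementary numerical inequality $(1-\epsilon/2)(1 - q\epsilon/4)^{1/q} \geq 1-\epsilon$ whenever $0 < \epsilon \leq 2/q$. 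For this I would apply Bernoulli's inequality in the form $(1-u)^{1/q} \geq 1 - u/q$ for $u = q\epsilon/4 \in [0,1/2]$ (valid since $1/q \leq 1$ and $u < 1$), obtaining $(1 - q\epsilon/4)^{1/q} \geq 1 - \epsilon/4$. Then it suffices to check $(1-\epsilon/2)(1-\epsilon/4) \geq 1-\epsilon$, which expands to $1 - 3\epsilon/4 + \epsilon^2/8 \geq 1 - \epsilon$, i.e. $\epsilon/4 + \epsilon^2/8 \geq 0$ — true for all $\epsilon \geq 0$.

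I don't anticipate a genuine obstacle here; the only points requiring a little care are (i) making sure the telescoping sum is legitimate, which just needs the per-iteration hypothesis to hold for every $t \in \{0, \dots, T-1\}$ and $D_{t+1} - D_t > 0$ (one should note $\psnorm[q]{\vx^{(t)}}$ is nondecreasing, or at least that if $D_{t+1} = D_t$ the inequality is vacuous/consistent), and (ii) choosing the right elementary bound for the $q$-th root so that the constants line up with the constraint $\epsilon \leq 2/q$ rather than something weaker. The constraint $\epsilon \leq 2/q$ is exactly what makes $q\epsilon/4 \leq 1/2$, keeping the Bernoulli step clean. If the direct Bernoulli bound turned out to be too lossy for some corner of the parameter range, the fallback is to use $\ln(1-u) \geq -u - u^2$ valid for $u \leq 1/2$ together with $\ln(1-\epsilon/2) \geq -\epsilon/2 - \epsilon^2/4$ and compare against $\ln(1-\epsilon) \leq -\epsilon$, but I expect the straightforward Bernoulli argument above to suffice.
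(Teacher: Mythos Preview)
Your telescoping argument is exactly the paper's, and it lands on the same intermediate bound
\[
\frac{\psnorm[p]{\AA \vx^{(T)}}^q}{\psnorm[q]{\vx^{(T)}}^q}\;\ge\;\left(1-\tfrac{\epsilon}{2}\right)^q V^q\left(1-\frac{1}{\psnorm[q]{\vx^{(T)}}^q}\right).
\]
The gap is in your Bernoulli step. For $q\ge 1$ the exponent $1/q$ lies in $(0,1]$, and in that regime Bernoulli reads $(1-u)^{1/q}\le 1-u/q$, not $\ge$. Concretely, take $q=2$, $\epsilon=1$ (so $u=q\epsilon/4=1/2$): then $(1-u)^{1/q}=1/\sqrt{2}\approx 0.707$, while $1-u/q=1-\epsilon/4=0.75$; your claimed inequality fails.

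The target inequality $(1-\epsilon/2)(1-q\epsilon/4)^{1/q}\ge 1-\epsilon$ is still true, but one needs the estimate in the \emph{other} direction. The paper shows $1-q\epsilon/4\ge(1-\epsilon/2)^q$ whenever $\epsilon q\le 2$, which yields
\[
(1-\tfrac{\epsilon}{2})\,(1-\tfrac{q\epsilon}{4})^{1/q}\;\ge\;(1-\tfrac{\epsilon}{2})\cdot(1-\tfrac{\epsilon}{2})\;=\;(1-\tfrac{\epsilon}{2})^2\;\ge\;1-\epsilon.
\]
Equivalently, staying with $q$-th powers as the paper does: once $\psnorm[q]{\vx^{(T)}}^q\ge 4/(q\epsilon)$ one has $1-1/\psnorm[q]{\vx^{(T)}}^q\ge 1-q\epsilon/4\ge(1-\epsilon/2)^q$, hence the ratio is at least $((1-\epsilon/2)^2V)^q\ge((1-\epsilon)V)^q$.
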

With scaling updates, i.e. $\vx^{(t+1)} = \vx^{(t)} + \vdelta^{(t)}$ where $\vdelta^{(t)}_i \in \lbrace 0, \alpha \vx_i^{(t)} \rbrace$ we obtain the following inequality.
\begin{equation*}
    \frac{\psnorm[p]{\AA \vx^{(t+1)}}^q - \psnorm[p]{\AA \vx^{(t)}}^q}{\psnorm[q]{\vx^{(t+1)}}^q - \psnorm[q]{\vx^{(t)}}^q} \geq \frac{1}{(1+\alpha)^{q-1}}\psnorm[p]{\AA \vx^{(t)}}^{q-p} \frac{\left\langle \AA^T \left( \AA \vx^{(t)} \right)^{p-1}, \vdelta^{(t)} \right\rangle}{\left\langle \left( \vx^{(t)} \right)^{q-1}, \vdelta^{(t)} \right\rangle}.
\end{equation*}

To simplify the notation, we define a vector of potentials $\vq(\vx) \in \rr^n$. Those potentials have nice properties, for instance, the direction of the gradient of $g$ can be extracted from the values of the potentials. Indeed, they are made such that $\nabla g(\vx) = \frac{\nabla g_d(\vx)}{g_d(\vx)} \circ \left(\vq(\vx)- g(\vx) \ones\right)$. That means when the potential is larger than $g(\vx)$, the gradient is positive. 
\begin{definition}
    \label{def:potentials}
    Given a positive vector $\vx \in \rr^n$, we define the potentials of $\vx$ as the vector $\vq(\vx) \in \rr^n$ such that
    \begin{equation*}
        \vq(\vx)_k = \psnorm[p]{\AA \vx}^{q-p} \frac{\left\langle \AA_{\sbullet,k}, \left(\AA \vx \right)^{p-1} \right\rangle}{\vx^{q-1}_k}.
    \end{equation*}
\end{definition}
The algorithm does not follow the traditional gradient descent framework, however each iteration still follows the direction of the gradient. We want our iterate to satisfy the following corollary of \autoref{lem:basic_invariant}.

\begin{corollary}
    \label{cor:first_invariant}
    Given a matrix $\AA \in \rr^{m \times n}$, $\vx^{(0)} \in \rr^n$ such that $\psnorm[q]{\vx^{(0)}} = 1$, $0 < q,p$, $0 < \epsilon < 2/q$ and $V \geq 0$, assume the following invariant is satisfied at each iteration
    \begin{equation}
        \label{eq:invariant_full}
        \vdelta^{(t)}_i = \alpha \vx_i^{(t)} \implies \frac{1}{\left(1+\alpha\right)^{q-1}} \vq(\vx)_i \geq \left(\left(1-\frac{\epsilon}{2}\right)V\right)^q,  
    \end{equation}
    then, once $\psnorm[q]{\vx^{(T)}}^q \geq \frac{4}{q\epsilon}$, we have 
    \begin{equation*}
        \frac{\psnorm[p]{\AA \vx^{(T)}}}{\psnorm[q]{\vx^{(T)}}} \geq (1-\epsilon)V.
    \end{equation*}
\end{corollary}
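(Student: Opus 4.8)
The plan is to show that \Cref{cor:first_invariant} follows from \Cref{lem:basic_invariant} by verifying that the stated invariant \eqref{eq:invariant_full} implies the hypothesis of \Cref{lem:basic_invariant}, namely that at each iteration
\begin{equation*}
    \frac{\psnorm[p]{\AA \vx^{(t+1)}}^q - \psnorm[p]{\AA \vx^{(t)}}^q}{\psnorm[q]{\vx^{(t+1)}}^q - \psnorm[q]{\vx^{(t)}}^q} \geq \left(\left(1-\frac{\epsilon}{2}\right)V\right)^q .
\end{equation*}
Granting this, the conclusion is immediate: \Cref{lem:basic_invariant} applies verbatim (the hypothesis $\psnorm[q]{\vx^{(0)}}=1$ and $0<\epsilon\le 2/q$ carries over, noting $0<\epsilon<2/q$ is a special case), and yields $\psnorm[p]{\AA\vx^{(T)}}/\psnorm[q]{\vx^{(T)}} \geq (1-\epsilon)V$ once $\psnorm[q]{\vx^{(T)}}^q \geq \tfrac{4}{q\epsilon}$.

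So the work is the one-iteration bound. First I would start from the scaling-update inequality already displayed in the excerpt,
\begin{equation*}
    \frac{\psnorm[p]{\AA \vx^{(t+1)}}^q - \psnorm[p]{\AA \vx^{(t)}}^q}{\psnorm[q]{\vx^{(t+1)}}^q - \psnorm[q]{\vx^{(t)}}^q} \geq \frac{1}{(1+\alpha)^{q-1}}\,\psnorm[p]{\AA \vx^{(t)}}^{q-p} \frac{\left\langle \AA^T \left( \AA \vx^{(t)} \right)^{p-1}, \vdelta^{(t)} \right\rangle}{\left\langle \left( \vx^{(t)} \right)^{q-1}, \vdelta^{(t)} \right\rangle},
\end{equation*}
which holds for updates $\vdelta^{(t)}_i \in \{0,\alpha\vx_i^{(t)}\}$. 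Next I would expand both inner products coordinatewise. Writing $S$ for the set of hit coordinates (those with $\vdelta_i^{(t)}=\alpha\vx_i^{(t)}$), the numerator inner product is $\alpha\sum_{i\in S}\vx_i^{(t)}\langle\AA_{\sbullet,i},(\AA\vx^{(t)})^{p-1}\rangle$ and the denominator inner product is $\alpha\sum_{i\in S}(\vx_i^{(t)})^{q}$. Pulling the factor $\psnorm[p]{\AA\vx^{(t)}}^{q-p}$ inside, the right-hand side becomes the weighted average
\begin{equation*}
    \frac{1}{(1+\alpha)^{q-1}}\cdot\frac{\sum_{i\in S}(\vx_i^{(t)})^{q}\,\vq(\vx^{(t)})_i}{\sum_{i\in S}(\vx_i^{(t)})^{q}},
\end{equation*}
a convex combination of the values $\vq(\vx^{(t)})_i$ over $i\in S$ with nonnegative weights $(\vx_i^{(t)})^q$. (Here I use positivity of $\vx^{(t)}$, which is maintained since $\alpha>0$ and the update only scales coordinates up, and nonnegativity of $\AA$, so every $\vq(\vx^{(t)})_i\ge 0$.) By invariant \eqref{eq:invariant_full}, every hit coordinate satisfies $\tfrac{1}{(1+\alpha)^{q-1}}\vq(\vx^{(t)})_i \ge ((1-\epsilon/2)V)^q$, so the weighted average — hence the entire right-hand side — is at least $((1-\epsilon/2)V)^q$, which is exactly the hypothesis of \Cref{lem:basic_invariant}.

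I do not anticipate a serious obstacle here; the statement is essentially a repackaging of \Cref{lem:basic_invariant} through the potential notation of \Cref{def:potentials}. The one point requiring a little care is the min-over-average step: one must note that $S$ is nonempty at every iteration for the bound to be meaningful (a $0/0$ case is vacuous but should be excluded or handled), and that the weights $(\vx_i^{(t)})^q$ are strictly positive so the convex-combination argument is valid; both follow from the invariant of the algorithm together with $\vx^{(t)}>\zeros$. A secondary bookkeeping check is that the denominator $\psnorm[q]{\vx^{(t+1)}}^q-\psnorm[q]{\vx^{(t)}}^q$ is strictly positive (again because $\alpha>0$ and $S\neq\emptyset$), so dividing through is legitimate and the inequality direction is preserved.
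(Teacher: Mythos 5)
Your argument is exactly the derivation the paper intends: combine the displayed scaling-update inequality with \autoref{def:potentials} to rewrite the right-hand side as a $(\vx_i^{(t)})^q$-weighted average of potentials over the hit set, lower-bound it via Invariant~(\ref{eq:invariant_full}), and then invoke \autoref{lem:basic_invariant}. The coordinatewise expansion and the bookkeeping about nonempty hit sets and positive weights are correct, so the proposal matches the paper's (implicit) proof.
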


For the iterate to satisfy the invariant in \autoref{cor:first_invariant}, there must always be a potential larger than $V^q$. This is guaranteed by the following lemma from \cite{BV11}.

\begin{lemma}[Lemma 3.3 from \cite{BV11}]
    \label{lem:infeasibility_full}
    Let $\AA$ be a non-negative matrix and $q \geq p \geq 1$. Given $\vx > \zeros$, there is a coordinate $k$ such that $\vq(\vx)_k \geq \qpnorm[q][p]{\AA}^q$.
\end{lemma}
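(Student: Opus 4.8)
The plan is to lower-bound $\max_k \vq(\vx)_k$ by testing the potential vector against the \emph{optimal} direction. Let $\vy$ attain $\qpnorm{\AA} =: N$; since $\AA$ is non-negative, passing to coordinate-wise absolute values only increases the ratio $\psnorm[p]{\AA\vy}/\psnorm[q]{\vy}$, so I may take $\vy \geq \zeros$ and, after rescaling, $\psnorm[q]{\vy} = 1$ with $\psnorm[p]{\AA\vy} = N$. I may also discard the identically-zero rows of $\AA$ (they affect none of $\AA\vx$, $\AA\vy$, $\vq(\vx)$), so that $(\AA\vx)_j > 0$ for every remaining row $j$ because $\vx > \zeros$. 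The starting point is then that $\vq(\vx) \geq \zeros$ (as $\AA \geq \zeros$ and $\vx > \zeros$) and $\sum_k \vy_k^q = \psnorm[q]{\vy}^q = 1$, so that $\max_k \vq(\vx)_k \geq \sum_k \vy_k^q\, \vq(\vx)_k = \langle \vy^q, \vq(\vx)\rangle$; hence it suffices to prove $\langle \vy^q, \vq(\vx)\rangle \geq N^q$.

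To do this I would estimate $\langle \vy^q, \vq(\vx)\rangle$ from below in two elementary moves. Unfolding \autoref{def:potentials} and swapping the order of summation,
\begin{equation*}
  \langle \vy^q, \vq(\vx)\rangle
  = \psnorm[p]{\AA\vx}^{q-p} \sum_j (\AA\vx)_j^{p-1} \sum_k (\AA_{jk}\vx_k)\left(\frac{\vy_k}{\vx_k}\right)^{q}.
\end{equation*}
For each fixed row $j$, I would apply Jensen's inequality to the convex map $t \mapsto t^q$ (recall $q \geq 1$) against the probability weights $\AA_{jk}\vx_k / (\AA\vx)_j$ in the variable $k$; this gives $\sum_k (\AA_{jk}\vx_k)(\vy_k/\vx_k)^q \geq (\AA\vx)_j\big((\AA\vy)_j/(\AA\vx)_j\big)^q = (\AA\vx)_j^{1-q}(\AA\vy)_j^q$, hence
\begin{equation*}
  \langle \vy^q, \vq(\vx)\rangle \;\geq\; \psnorm[p]{\AA\vx}^{q-p} \sum_j (\AA\vx)_j^{p-q}(\AA\vy)_j^{q}.
\end{equation*}
The second move is Hölder's inequality across rows: writing $(\AA\vy)_j^p = \big((\AA\vx)_j^{p-q}(\AA\vy)_j^q\big)^{p/q}\big((\AA\vx)_j^{p}\big)^{(q-p)/q}$ and using the conjugate exponents $q/p$ and $q/(q-p)$ (both $\geq 1$ since $q \geq p$) bounds $\psnorm[p]{\AA\vy}^p = \sum_j (\AA\vy)_j^p$ by $\big(\sum_j (\AA\vx)_j^{p-q}(\AA\vy)_j^q\big)^{p/q}\big(\sum_j (\AA\vx)_j^{p}\big)^{(q-p)/q}$; raising to the power $q/p$ turns this into $N^q = \psnorm[p]{\AA\vy}^q \leq \psnorm[p]{\AA\vx}^{q-p}\sum_j (\AA\vx)_j^{p-q}(\AA\vy)_j^q \leq \langle \vy^q, \vq(\vx)\rangle \leq \max_k \vq(\vx)_k$, which closes the argument.

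The genuine difficulty is choosing this route rather than a cruder one. The estimate already recorded in \autoref{lem:potentials} — equivalently, first-order optimality / concavity of $\vu \mapsto \psnorm[p]{\AA\vu^{1/q}}$ on $\Delta_n$ — only yields $\max_k \vq(\vx)_k \geq N^p \psnorm[p]{\AA\vx}^{q-p}$, which is strictly smaller than $N^q$ precisely when $\psnorm[p]{\AA\vx} < N$, i.e.\ whenever $\vx$ is not itself optimal, so it does not suffice. The gain comes from testing against the exact maximizer $\vy$ with the weights $\vy^q$, and then splitting the work between a per-row Jensen step (which absorbs the exponent $q$ in $\vy^q/\vx^{q-1}$) and a cross-row Hölder step with exponent $q/p$ (which reassembles $\psnorm[p]{\AA\vy}^q$). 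Minor points I would verify at the end: the degenerate case $p = q$, where the Hölder step becomes a trivial equality and the whole chain collapses to the Collatz–Wielandt-type identity; and the fact that the supremum defining $N$ is attained, which is automatic for finite $p,q$.
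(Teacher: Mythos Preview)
Your argument is correct. Note, however, that the paper does not supply its own proof of this lemma: it is quoted from \cite{BV11} (their Lemma~3.3) and used as a black box, so there is no in-paper argument to compare against. Your route---averaging $\vq(\vx)$ against the weights $\vy_k^q$ of a normalized non-negative maximizer, then bounding the result via a per-row Jensen step (convexity of $t\mapsto t^q$) followed by a cross-row H\"older step with exponents $q/p$ and $q/(q-p)$---is a clean and self-contained derivation; the handling of zero rows and of the degenerate case $p=q$ is also fine. Your side observation that the cruder chain in \autoref{lem:potentials} is insufficient when $q>p$ is on point: that inequality only yields $\max_k \vq(\vx)_k \geq \qpnorm{\AA}^p\bigl(\psnorm[p]{\AA\vx}/\psnorm[q]{\vx}\bigr)^{q-p}$, which falls short of $\qpnorm{\AA}^q$ unless $\vx$ is itself a maximizer.
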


As long as $V \leq \qpnorm[q][p]{\AA}$, there always is a coordinate with potential above $V^q$. Setting $\alpha = \epsilon/2$, at time $t$ we scale coordinates with potential larger than $V^q$ so that our iterate satisfies Invariant~(\ref{eq:invariant_full}). At each iteration, a different coordinate may be scaled which means $\psnorm[q]{\vx^{(T)}}$ increases quite slowly. We only have the following lower bound $\psnorm[q]{\vx^{(T)}}^q \geq (1+\alpha)^{T/n}$, which requires $T = \OOt(n/q \epsilon)$ iterations are required to solve the approximate decision problem.

\subsection{A Width-Independent Algorithm}

In order to achieve an efficient algorithm, it is necessary to improve the number of itertaions by a factor of $n$. The challenge we had to deal before was that the $\lq$ norm of $\vx^{(t)}$ was not increasing rapidly enough. To overcome this issue, we will modify the algorithm to ensure there is a coordinate $k$ such that $\vx_k^{(t)}$ is large. Notice that \autoref{lem:infeasibility_full} guarantees that there is at least one coordinate with potential larger than $V^q$. The new invariant we would like to implement is that the set of coordinates with potential larger than $V^q$ does not take any new elements. This requires to also scale coordinates with a potential slightly smaller than $V^q$ which does not prevent satisfying the first Invariant~(\ref{eq:invariant_full}). In order to do so, we need to analyze how the potentials evolve between two iterations.

\begin{lemma}
    \label{lem:equivalent_potentials_full}
    Given a non-negative matrix $\AA \in \rr^{m \times n}$, $q \geq p \geq 1$ and a positive vector $\vx$, $\alpha > 0$. Let $\vdelta$ be such that $\vdelta_i = \alpha \vx_i$ or $\vdelta_i = 0$, we have
    \begin{enumerate}
        \item If $\vdelta_i \neq 0$, then $\vq(\vx + \vdelta)_i \leq \vq(\vx)_i$.
        \item If $\vdelta_i = 0$, then $\vq(\vx + \vdelta)_i \leq (1+\alpha)^{q-1} \vq(\vx)_i$.
    \end{enumerate}
\end{lemma}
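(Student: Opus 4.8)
The plan is to exploit monotonicity throughout. Since each coordinate of $\vdelta$ is either $0$ or $\alpha\vx_i$, we have the coordinate-wise sandwich $\vx \leq \vx + \vdelta \leq (1+\alpha)\vx$, and because $\AA$ is non-negative this propagates to $\AA\vx \leq \AA(\vx+\vdelta) \leq (1+\alpha)\AA\vx$ coordinate-wise. Every factor appearing in the definition of $\vq(\cdot)_i$ can then be controlled from this single observation.

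First I would bound the two ``numerator'' factors. Applying the monotone map $t \mapsto t^{q-p}$ (monotone because $q \geq p$) to the norm inequality $\psnorm[p]{\AA\vx} \leq \psnorm[p]{\AA(\vx+\vdelta)} \leq (1+\alpha)\psnorm[p]{\AA\vx}$ gives $\psnorm[p]{\AA(\vx+\vdelta)}^{q-p} \leq (1+\alpha)^{q-p}\psnorm[p]{\AA\vx}^{q-p}$. Likewise, raising the coordinate-wise inequality $\AA(\vx+\vdelta) \leq (1+\alpha)\AA\vx$ to the power $p-1 \geq 0$ and pairing against the non-negative vector $\AA_{\sbullet,i}$ yields $\langle \AA_{\sbullet,i}, (\AA(\vx+\vdelta))^{p-1} \rangle \leq (1+\alpha)^{p-1}\langle \AA_{\sbullet,i}, (\AA\vx)^{p-1} \rangle$. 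Multiplying, the numerator of $\vq(\vx+\vdelta)_i$ is at most $(1+\alpha)^{q-p}(1+\alpha)^{p-1} = (1+\alpha)^{q-1}$ times the numerator of $\vq(\vx)_i$.

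Then I would treat the denominator $(\vx+\vdelta)_i^{q-1}$ by cases. If $\vdelta_i \neq 0$, then $(\vx+\vdelta)_i = (1+\alpha)\vx_i$, so the denominator gains exactly the factor $(1+\alpha)^{q-1}$, which cancels the gain in the numerator and yields conclusion $1$. If $\vdelta_i = 0$, then $(\vx+\vdelta)_i = \vx_i$, the denominator is unchanged, and the surviving numerator factor $(1+\alpha)^{q-1}$ yields conclusion $2$.

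There is no serious obstacle here; the only points requiring care are that the argument uses $q - p \geq 0$ and $p - 1 \geq 0$ in an essential way (so that raising to those powers preserves the inequalities), and that the exact cancellation $(1+\alpha)^{(q-p)+(p-1)} = (1+\alpha)^{q-1}$ is what makes part $1$ tight. It is worth sanity-checking the boundary cases $p = 1$ (where $(\AA\vx)^{p-1} = \ones$) and $q = p$ (where the $\psnorm[p]{\AA\vx}^{q-p}$ factor disappears), but both are handled automatically by the same monotonicity argument.
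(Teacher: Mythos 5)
Your proof is correct and follows essentially the same route as the paper: bound both numerator factors by $(1+\alpha)^{q-p}$ and $(1+\alpha)^{p-1}$ respectively (using non-negativity of $\AA$ and $q \geq p \geq 1$), then split on the denominator $(\vx+\vdelta)_i^{q-1}$ to either cancel or retain the $(1+\alpha)^{q-1}$ factor. You merely spell out the monotonicity steps that the paper's one-line display leaves implicit.
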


\begin{proof}
    No matter the value of $\vdelta_i$, we have
    \begin{equation*}
        \vq(\vx + \vdelta)_i = \psnorm[p]{\AA \left(\vx + \vdelta\right)}^{q-p} \frac{\left\langle \AA_{\sbullet, i}, \left(\AA (\vx + \vdelta)\right)^{p-1} \right\rangle}{\left(\vx + \vdelta\right)_i^{q-1}}
        \leq (1+\alpha)^{q-1} \psnorm[p]{\AA \vx}^{q-p} \frac{\left\langle \AA_{\sbullet,i}, \left(\AA \vx\right)^{p-1} \right\rangle}{\left(\vx + \vdelta\right)_i^{q-1}}.
    \end{equation*}
    If $\vdelta_i = \alpha \vx_i$, then $\vq(\vx + \vdelta)_i \leq \vq(\vx)_i$, otherwise, $\vq(\vx + \vdelta)_i \leq (1+\alpha)^{q-1} \vq(\vx)_i$.
\end{proof}

As mentioned earlier, if we consider a multiplicative scaling $\vx^{(t+1)} = \vx^{(t)} + \vdelta^{(t)}$ where $\vdelta_i^{(t)} \neq 0$ if and only if the potential of coordinate $i$ is larger than $\theta$ for some $\theta$, then it creates a glass ceiling where no new coordinates can have potential noticeably larger than $\theta$. We quantify this upper bound in the following corollary.

\begin{corollary}
    \label{cor:second_invariant}
    Assume at each iteration we scale coordinates with potential larger than $\theta$, then the set of coordinates with potential larger than $(1+\alpha)^{q-1}\theta$ is decreasing.
\end{corollary}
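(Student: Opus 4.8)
The plan is a direct one-step inclusion argument: for each iteration $t$, let $S_t := \{\, i : \vq(\vx^{(t)})_i > (1+\alpha)^{q-1}\theta \,\}$, and show $S_{t+1} \subseteq S_t$; chaining this over $t$ gives that the sets are nested decreasing, which is the claim. The only inputs are \autoref{lem:equivalent_potentials_full} and the scaling rule: coordinate $i$ gets $\vdelta_i^{(t)} = \alpha\vx_i^{(t)}$ exactly when $\vq(\vx^{(t)})_i > \theta$, and $\vdelta_i^{(t)} = 0$ otherwise.

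First I would fix $t$, take an arbitrary $i \in S_{t+1}$ (so $\vq(\vx^{(t+1)})_i > (1+\alpha)^{q-1}\theta$), and split on whether $i$ was hit at time $t$. If $\vdelta_i^{(t)} \neq 0$, then part 1 of \autoref{lem:equivalent_potentials_full} gives $\vq(\vx^{(t+1)})_i \leq \vq(\vx^{(t)})_i$, hence $\vq(\vx^{(t)})_i > (1+\alpha)^{q-1}\theta$, i.e. $i \in S_t$. If instead $\vdelta_i^{(t)} = 0$, then by the scaling rule $\vq(\vx^{(t)})_i \leq \theta$, so part 2 of \autoref{lem:equivalent_potentials_full} yields $\vq(\vx^{(t+1)})_i \leq (1+\alpha)^{q-1}\vq(\vx^{(t)})_i \leq (1+\alpha)^{q-1}\theta$, contradicting $i \in S_{t+1}$. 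Hence this second case is vacuous and $i \in S_t$ in all cases, proving $S_{t+1}\subseteq S_t$.

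The only point requiring care — and the ``main obstacle'', such as it is — is keeping the strict/non-strict inequalities consistent: the hitting threshold is ``potential \emph{larger than} $\theta$'' while membership in the monitored set is ``potential \emph{larger than} $(1+\alpha)^{q-1}\theta$'', so a non-hit coordinate has potential $\leq \theta$ and, after one multiplicative step, potential $\leq (1+\alpha)^{q-1}\theta$, which genuinely excludes it from $S_{t+1}$ rather than landing on the boundary. Beyond this bookkeeping, the corollary is an immediate consequence of \autoref{lem:equivalent_potentials_full}, so no further estimates are needed.
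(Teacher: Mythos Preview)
Your proof is correct and essentially identical to the paper's; the only cosmetic difference is that you argue the contrapositive (showing $i\in S_{t+1}\Rightarrow i\in S_t$) whereas the paper fixes $i\notin C^{(t)}$ and shows $i\notin C^{(t+1)}$. Both case-split on whether the coordinate was hit and invoke the two parts of \autoref{lem:equivalent_potentials_full} in the same way, and your handling of the strict/non-strict thresholds is exactly what the paper implicitly uses.
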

\begin{proof}
    Let $C^{(t)}$ be the set of coordinates with potential larger than $(1+\alpha)^{q-1} \theta$ at time $t$. For a coordinate $i$ that is not in $C^{(t)}$, we have that if it is scaled at time $t$, then by \autoref{lem:equivalent_potentials_full}, $\vq(\vx^{(t+1)})_i \leq \vq(\vx^{(t)})_i$, hence $i$ is not part of $C^{(t+1)}$. Moreover, if $i$ is not scaled, then we know $\vq(\vx^{(t)})_i \leq \theta$, hence using \autoref{lem:equivalent_potentials_full}, $\vq(\vx^{(t+1)}) \leq (1+\alpha)^{q-1} \theta$. Thus, $i \notin C^{(t+1)}$. 
\end{proof}

We choose $\alpha$ and the threshold such that the glass ceiling is equal to $V^q$. That means that no new coordinates can have their potential larger than $V^q$. In the same time, we need to be careful about our choice of $\alpha$ and threshold so that Invariant~(\ref{eq:invariant_full}) is still satisfied.

Assume we hit coordinates with potential larger than $((1-\epsilon/4)V)^q$ with $\alpha = \epsilon/8$, we ensure that Invariant~(\ref{eq:invariant_full}) is satisfied since 
\begin{equation*}
    \frac{1}{\left(1+\frac{\epsilon}{8}\right)^{q-1}} \left(\left(1-\frac{\epsilon}{4}\right) V\right)^q \geq \left(\left(1-\frac{\epsilon}{2}\right)V\right)^q.
\end{equation*}

Moreover, using \autoref{cor:second_invariant}, we ensure that the set of coordinates with potential larger than $(1+\epsilon/8)^{q-1}(1-\epsilon/4)^q V^q < V^q$ is decreasing. We also know that at the end of the algorithm at least one coordinate has potential larger than $V^q$ using \autoref{lem:infeasibility_full}. Hence, those coordinates were scaled at each iteration and are equal to $(1+\alpha)^T \vx^{(0)}$. This gives the following theorem which we prove in \autoref{apps:algo_qpnorm_better}.

\begin{theorem}
    \label{thm:algo_qpnorm_better}
    Given a non-negative matrix $\AA \in \rr^{m \times n}$, a positive real $0 < \epsilon \leq 1/2q$, two reals $q \geq p \geq 1$, and a guess $V$ on $\qpnorm{\AA}$, \autoref{alg:qpnorm2} recovers $\vx$ such that 
    \begin{equation*}
        \frac{\psnorm[p]{\AA \vx}}{\psnorm[q]{\vx}} \geq (1-\epsilon)V
    \end{equation*}
    or certifies infeasibility in time 
    \begin{equation*}
        \OO\left(\frac{1}{q \epsilon} \log \left(\frac{n}{q \epsilon}\right) (\nnz(\AA) + m \log p)\right).
    \end{equation*}
\end{theorem}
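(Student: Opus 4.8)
The plan is to describe the algorithm precisely and then verify correctness and running time by assembling the pieces developed above. The algorithm initializes $\vx^{(0)}$ with $\psnorm[q]{\vx^{(0)}} = 1$ (e.g.\ a uniform vector on the unit $\lq$-sphere, or a single nonzero coordinate), fixes $\alpha = \epsilon/8$, and repeatedly: (i) computes the potential vector $\vq(\vx^{(t)})$ from \autoref{def:potentials} in $O(\nnz(\AA) + m\log p)$ time by evaluating $\AA\vx^{(t)}$, then $(\AA\vx^{(t)})^{p-1}$, then $\AA^T(\AA\vx^{(t)})^{p-1}$, and dividing coordinatewise by $(\vx^{(t)})^{q-1}$ and scaling by $\psnorm[p]{\AA\vx^{(t)}}^{q-p}$; (ii) if no coordinate has $\vq(\vx^{(t)})_i \geq ((1-\epsilon/4)V)^q$, halts and certifies infeasibility; (iii) otherwise sets $\vdelta^{(t)}_i = \alpha\vx^{(t)}_i$ exactly for those coordinates $i$ with $\vq(\vx^{(t)})_i \geq ((1-\epsilon/4)V)^q$ and $\vdelta^{(t)}_i = 0$ elsewhere, and updates $\vx^{(t+1)} = \vx^{(t)} + \vdelta^{(t)}$; (iv) halts and returns $\vx^{(t+1)}$ once $\psnorm[q]{\vx^{(t+1)}}^q \geq 4/(q\epsilon)$.

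For correctness, I would argue as follows. If the algorithm ever halts in step (ii), then no coordinate has potential at least $((1-\epsilon/4)V)^q$; but if $V \leq \qpnorm[q][p]{\AA}$, \autoref{lem:infeasibility_full} guarantees a coordinate with potential at least $\qpnorm[q][p]{\AA}^q \geq V^q \geq ((1-\epsilon/4)V)^q$, a contradiction, so certifying infeasibility is sound. If instead the algorithm halts in step (iv): since we hit exactly the coordinates with $\vq(\vx^{(t)})_i \geq ((1-\epsilon/4)V)^q$ with $\alpha = \epsilon/8$, the elementary inequality $(1+\epsilon/8)^{-(q-1)}((1-\epsilon/4)V)^q \geq ((1-\epsilon/2)V)^q$ (which holds for $\epsilon \leq 1/(2q)$, since $(1+\epsilon/8)^{q-1} \leq e^{(q-1)\epsilon/8} \leq e^{\epsilon q/8}$ and $(1-\epsilon/4)^q/(1-\epsilon/2)^q \geq 1 + \epsilon q/4 + \cdots \geq e^{\epsilon q/8}$ in this regime — this routine estimate I would spell out) shows Invariant~(\ref{eq:invariant_full}) of \autoref{cor:first_invariant} holds at every iteration, so \autoref{cor:first_invariant} directly yields $\psnorm[p]{\AA\vx^{(T)}}/\psnorm[q]{\vx^{(T)}} \geq (1-\epsilon)V$.

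The heart of the argument — and the place I expect the main subtlety — is bounding the number of iterations $T$ by $O\!\left(\frac{1}{q\epsilon}\log\frac{n}{q\epsilon}\right)$. Here I would invoke \autoref{cor:second_invariant} with $\theta = ((1-\epsilon/4)V)^q$: the set of coordinates with potential larger than $(1+\alpha)^{q-1}\theta$ is non-increasing in $t$. By the same inequality estimate as above, $(1+\epsilon/8)^{q-1}((1-\epsilon/4)V)^q < V^q$ in the regime $\epsilon \leq 1/(2q)$, so in particular the set of coordinates ever attaining potential $\geq V^q$ can only shrink. By \autoref{lem:infeasibility_full} this set is always nonempty, so there is a coordinate $k$ that lies in it at the final iteration $T$, hence lay in it at every iteration $0,1,\dots,T-1$ as well; consequently $\vq(\vx^{(t)})_k \geq V^q \geq \theta$ at every such $t$, so coordinate $k$ was hit at every iteration and $\vx^{(T)}_k = (1+\alpha)^T \vx^{(0)}_k$. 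Therefore $\psnorm[q]{\vx^{(T)}}^q \geq (\vx^{(T)}_k)^q = (1+\alpha)^{qT}(\vx^{(0)}_k)^q$, and since $\vx^{(0)}_k \geq 1/\mathrm{poly}(n)$ (for a uniform start $\vx^{(0)}_k = n^{-1/q}$, so $(\vx^{(0)}_k)^q = 1/n$), the stopping condition $\psnorm[q]{\vx^{(T)}}^q \geq 4/(q\epsilon)$ is reached once $(1+\epsilon/8)^{qT} \geq 4n/(q\epsilon)$, i.e.\ after $T = O\!\left(\frac{1}{q\epsilon}\log\frac{n}{q\epsilon}\right)$ iterations using $\log(1+\epsilon/8) = \Theta(\epsilon)$. (One must also check the algorithm does not overshoot badly — a single iteration multiplies $\psnorm[q]{\vx}^q$ by at most $(1+\alpha)^q = 1 + O(\epsilon q) \cdot$, bounded, so the final norm is $O(1/(q\epsilon))$, keeping the claimed guarantee clean; this I would note but not belabor.) Multiplying $T$ by the per-iteration cost $O(\nnz(\AA) + m\log p)$ — the $\log p$ accounting for fast exponentiation to compute the $(p-1)$-th powers and the $(q-1)$-th roots/powers to sufficient precision — gives the stated running time $O\!\left(\frac{1}{q\epsilon}\log\frac{n}{q\epsilon}(\nnz(\AA) + m\log p)\right)$, completing the proof.
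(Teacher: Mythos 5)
There is a genuine gap, and it sits exactly where you flagged ``the heart of the argument.'' You changed the infeasibility test: your algorithm declares infeasibility only when no coordinate has potential at least $((1-\epsilon/4)V)^q$ (the hitting threshold), whereas the paper's \autoref{alg:qpnorm2} checks after every update whether some coordinate has potential at least $V^q$ and declares infeasibility otherwise. With your weaker test, your key claim ``by \autoref{lem:infeasibility_full} the set of coordinates with potential $\geq V^q$ is always nonempty'' is unjustified: the lemma only produces a coordinate with potential $\geq \qpnorm{\AA}^q$, which dominates $V^q$ only under the assumption $V \leq \qpnorm{\AA}$. In the regime $(1-\epsilon/4)V \leq \qpnorm{\AA} < V$ your test may never fire (there can always be coordinates above the hitting threshold), yet no coordinate need ever have potential $\geq V^q$ — e.g.\ near the maximizer for $q=p$ all potentials sit at $\qpnorm{\AA}^q < V^q$. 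Then the witness coordinate that is hit at every single iteration may not exist, the identity $\vx_k^{(T)} = (1+\alpha)^T \vx_k^{(0)}$ is lost, and $\psnorm[q]{\vx}^q$ is only guaranteed to grow additively by about $((1+\alpha)^q-1)/n$ per iteration, so the loop can take on the order of $n/(q\epsilon)^2$ iterations rather than the claimed $\OO\bigl(\frac{1}{q\epsilon}\log\frac{n}{q\epsilon}\bigr)$. Since the theorem's time bound must hold for \emph{every} input $V$ (it covers both the ``recover $\vx$'' and ``certify infeasibility'' outcomes), your argument only establishes it when $V \leq \qpnorm{\AA}$.

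The fix is the paper's version of the test: after each update, if all potentials are below $V^q$, return ``infeasible'' (sound by \autoref{lem:infeasibility_full}, exactly as in your step (ii) argument). Then, in any execution that does not return ``infeasible,'' the existence of a coordinate with potential $\geq V^q$ at each iteration comes from the algorithm's own check rather than from the lemma, and your glass-ceiling argument via \autoref{cor:second_invariant} applies verbatim: that coordinate had potential $\geq V^q$ at all earlier iterations, was therefore always hit, and the geometric growth gives $T = \OO\bigl(\frac{1}{q\epsilon}\log\frac{n}{q\epsilon}\bigr)$ unconditionally. The remainder of your proof — verification of Invariant~(\ref{eq:invariant_full}) with $\alpha=\epsilon/8$ and threshold $((1-\epsilon/4)V)^q$, the appeal to \autoref{cor:first_invariant}, and the per-iteration cost $\OO(\nnz(\AA)+m\log p)$ — matches the paper's proof.
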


\begin{algorithm}[H]
    \caption{Approximating $\lqp$-norm of non-negative matrices}
    \label{alg:qpnorm2}
    \begin{algorithmic}
        \Require $\AA \in \rr^{m \times n}_{\geq 0}$, $V$, $\epsilon$, and $q \geq p \geq 1$.
        \Ensure $\vx$ such that $\frac{\psnorm{\AA \vx}}{\psnorm[q]{\vx}} \geq (1-\epsilon) V$ or infeasibility certificate.
        \State $\vx \gets n^{-1/q} \ones$
        \State $\alpha \gets \epsilon /8$
        \While{$\frac{\psnorm{\AA \vx}}{\psnorm[q]{\vx}} < (1-\epsilon)V$}
        \State $\vdelta_i \gets \left\{
            \begin{array}{ll}
                \alpha \vx_i & \mbox{if } \vq(\vx)_i \geq ((1-\epsilon/4)V)^q \\
                0 &\mbox{otherwise} \\
            \end{array}
        \right.$
        \State $\vx \gets \vx + \vdelta$
        \If{for all $i, \vq(\vx)_i < V^q$}
            \State \textbf{return} ``infeasible''
        \EndIf
        \EndWhile
        \State \textbf{return} $\vx$
    \end{algorithmic}
\end{algorithm}

The algorithm can be parallelized since computing the potentials $\vq(\vx)$ requires computing two matrix-vector multiplications, hence we give as a corollary the running time obtained if we perform those operations in parallel.

\begin{corollary}
    \label{cor:parallel_running_time}
    Given a non-negative matrix $\AA \in \rr^{m \times n}$, a positive real $0 < \epsilon \leq 1/2q$, two reals $q \geq p \geq 1$, and a guess $V$ on $\qpnorm{\AA}$, it is possible to compute $\vx$ such that $\psnorm[p]{\AA \vx}/\psnorm[q]{\vx} \geq (1- \epsilon)\qpnorm[q][p]{\AA}$ in parallel time $\OOt(\frac{1}{q \epsilon})$ with total work $\OOt(\frac{\nnz(\AA)}{q \epsilon})$.
\end{corollary}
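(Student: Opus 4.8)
Here is how I would prove \autoref{cor:parallel_running_time}.

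The plan is to derive the corollary directly from \autoref{thm:algo_qpnorm_better} by unbundling its running-time bound into a near-optimal iteration count times a per-iteration cost, and then observing that each iteration of \autoref{alg:qpnorm2} consists of operations — two matrix-vector products, a handful of coordinate-wise maps, and a few $\ell_p$/$\ell_q$-norm evaluations — that all admit standard work-efficient low-depth parallel implementations. The model is the usual work--depth (PRAM) model, and since the parallel algorithm executes exactly the same iterates $\vx^{(t)}$ as the sequential one, the output guarantee $\psnorm[p]{\AA\vx}/\psnorm[q]{\vx}\ge (1-\epsilon)V$ — with $V$ taken to be (an estimate of) $\qpnorm[q][p]{\AA}$, e.g.\ from the reduction of \autoref{sec:approximation} — is inherited verbatim from \autoref{thm:algo_qpnorm_better}.

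First I would recall from the proof of \autoref{thm:algo_qpnorm_better} that, with $\alpha=\epsilon/8$, the coordinate $k$ guaranteed by \autoref{lem:infeasibility_full} to always carry potential at least $V^q$ never leaves the scaled set (by \autoref{cor:second_invariant}), so it is multiplied by $(1+\alpha)$ at every step; hence $\psnorm[q]{\vx^{(t)}}^q \ge (1+\alpha)^t (\vx^{(0)})_k^q$ reaches the stopping threshold $4/q\epsilon$ of \autoref{lem:basic_invariant} after $T=O\!\left(\tfrac{1}{\alpha}\log\tfrac{n}{q\epsilon}\right)=\OOt(1/q\epsilon)$ iterations. Next I would bound the work and depth of one iteration: forming the potentials $\vq(\vx)$ of \autoref{def:potentials} requires computing $\AA\vx$, applying $t\mapsto t^{p-1}$ coordinate-wise to this $m$-vector, computing $\AA^T(\AA\vx)^{p-1}$, and dividing coordinate-wise by $\vx^{q-1}$ and rescaling by $\psnorm[p]{\AA\vx}^{q-p}$; the loop guard and infeasibility test need $\psnorm[p]{\AA\vx}$, $\psnorm[q]{\vx}$, and coordinate-wise comparisons against $V^q$ and $((1-\epsilon/4)V)^q$; and the update $\vx\gets\vx+\vdelta$ is coordinate-wise. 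A matrix-vector product is a family of independent inner products, each an associative reduction of depth $O(\log m)$ (resp.\ $O(\log n)$), so over all rows (resp.\ columns) it costs $O(\nnz(\AA))$ work and $O(\log(nm))$ depth; the $\ell_p$- and $\ell_q$-norms are reductions of depth $O(\log(nm)+\log p)$ and work $O(n+m)$, the $\log p$ coming from computing $p$-th powers; and the coordinate-wise maps, divisions, comparisons and the update have depth $O(\log p)$ and work $O((n+m)\log p)$. Thus a single iteration has depth $\OOt(1)$ and work $O(\nnz(\AA)+(n+m)\log p)=\OOt(\nnz(\AA))$, under the standard assumption that $\AA$ has no all-zero row or column.

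Multiplying the two estimates gives total depth $T\cdot\OOt(1)=\OOt(1/q\epsilon)$ and total work $T\cdot\OOt(\nnz(\AA))=\OOt(\nnz(\AA)/q\epsilon)$, which is the claimed bound. I do not expect a serious obstacle here; the one point needing a little care is to verify that the iteration is not inherently sequential — the loop guard, the selection of which coordinates to scale, the update, and the infeasibility check each depend only on the current iterate $\vx^{(t)}$ and can be carried out in a single parallel round, so the depth accumulates as exactly $T$ copies of the per-iteration depth and nothing more.
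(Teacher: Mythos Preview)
Your approach is correct and matches the paper's own justification, which is just the one sentence preceding the corollary (``computing the potentials $\vq(\vx)$ requires computing two matrix-vector multiplications, hence we give as a corollary the running time obtained if we perform those operations in parallel''); the paper gives no formal proof beyond this. One small slip to fix: since coordinate $k$ is multiplied by $(1+\alpha)$ each step, you get $\psnorm[q]{\vx^{(t)}}^q \ge (\vx_k^{(t)})^q = (1+\alpha)^{qt}(\vx_k^{(0)})^q$, with $qt$ in the exponent rather than $t$, and this is what makes the iteration count $T=O\!\left(\tfrac{1}{q\alpha}\log\tfrac{n}{q\epsilon}\right)=\OOt(1/q\epsilon)$ --- as written, your $O(\tfrac{1}{\alpha}\log\tfrac{n}{q\epsilon})$ would only give $\OOt(1/\epsilon)$.
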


Using the Lewis weight sampling procedure from \cite{CohenP14,woodruff2022onlinelewisweightsampling}, we can obtain the following corollary which we prove in \autoref{apps:cor_algo_qpnorm_better}.

\begin{corollary}
    \label{cor:algo_qpnorm_better}
    Given a non-negative matrix $\AA \in \rr^{m \times n}$, a positive real $0 < \epsilon \leq 1/2q$, two reals $q \geq p \geq 1$, and a guess $V$ on $\qpnorm{\AA}$, by first preconditioning the input using \autoref{thm:lewis_sampling}, \autoref{alg:qpnorm2} returns $\vx$ such that 
    \begin{equation*}
        \frac{\psnorm[p]{\AA \vx}}{\psnorm[q]{\vx}} \geq (1-\epsilon)V
    \end{equation*}
    or certifies infeasibility in time 
    \begin{equation*}
        \OOt\left(\frac{r}{q \epsilon^3}n^{\max \lbrace \frac{p}{2},1 \rbrace} + \nnz(\AA) + n^\omega\right).
    \end{equation*}
    where $r$ is the maximum number of nonzero entries in a row of $\AA$.
\end{corollary}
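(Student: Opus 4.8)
The plan is to precondition $\AA$ by a sparser non-negative matrix $\AA'$ obtained from Lewis weight sampling, run \autoref{alg:qpnorm2} on $\AA'$, and transfer both possible outcomes back to $\AA$. Concretely, I would first invoke \autoref{thm:lewis_sampling} on $\AA$ with accuracy $\epsilon_1 := \epsilon/c$ for a suitable absolute constant $c \ge 1$ and failure probability $\delta = \poly(n,m)^{-1}$, obtaining $\AA'$ with $m' = \OOt(n^{\max\{1,p/2\}}/\epsilon^2)$ rows such that $(1-\epsilon_1)\psnorm[p]{\AA'\vx} \le \psnorm[p]{\AA\vx} \le (1+\epsilon_1)\psnorm[p]{\AA'\vx}$ for every $\vx$. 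Since the procedure only selects and rescales rows of $\AA$, the matrix $\AA'$ is non-negative (so \autoref{alg:qpnorm2} is applicable) and every row of $\AA'$ is a scalar multiple of a row of $\AA$, hence has at most $r$ nonzero entries; therefore $\nnz(\AA') \le r m' = \OOt(r\,n^{\max\{1,p/2\}}/\epsilon^2)$. This preconditioning step costs $\OOt(\nnz(\AA) + n^\omega + n^{\max\{1,p/2\}}/\epsilon^2)$, and its last term is dominated by $\tfrac{r}{q\epsilon^3}n^{\max\{p/2,1\}}$ because $q\epsilon \le \tfrac12 < 1 \le r$.

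Next, the two-sided comparison gives $\qpnorm{\AA} \le (1+\epsilon_1)\qpnorm{\AA'}$, so I would run \autoref{alg:qpnorm2} on the input $(\AA', V', \epsilon_2, q, p)$ with $V' := V/(1+\epsilon_1)$ and $\epsilon_2 := \epsilon/c$ (admissible since $\epsilon_2 \le 1/(2q)$). By \autoref{thm:algo_qpnorm_better} the call either returns $\vx$ with $\psnorm[p]{\AA'\vx}/\psnorm[q]{\vx} \ge (1-\epsilon_2)V'$, in which case $\psnorm[p]{\AA\vx}/\psnorm[q]{\vx} \ge (1-\epsilon_1)(1-\epsilon_2)V' = \tfrac{(1-\epsilon_1)(1-\epsilon_2)}{1+\epsilon_1}V \ge (1-\epsilon)V$ once $c$ is large enough; or it certifies $\qpnorm{\AA'} < V'$, and then $\qpnorm{\AA} \le (1+\epsilon_1)\qpnorm{\AA'} < (1+\epsilon_1)V' = V$, so the infeasibility certificate transfers. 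The cost of this call is, by \autoref{thm:algo_qpnorm_better}, $\OOt\bigl(\tfrac{1}{q\epsilon}(\nnz(\AA') + m'\log p)\bigr) = \OOt\bigl(\tfrac{r}{q\epsilon^3}n^{\max\{p/2,1\}}\bigr)$, using $\nnz(\AA') \le r m'$ and $m'\log p = \OOt(m')$. Adding the preconditioning cost yields the claimed running time.

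The main obstacle, such as it is, is making the error arithmetic close: both the sampling step and \autoref{alg:qpnorm2} must be run at accuracy $\Theta(\epsilon)$ rather than $\epsilon$, and one must check both that the composed distortion $\tfrac{(1-\epsilon_1)(1-\epsilon_2)}{1+\epsilon_1}$ still exceeds $1-\epsilon$ and that the threshold $V' = V/(1+\epsilon_1)$ is chosen so that $\qpnorm{\AA'} < V'$ provably forces $\qpnorm{\AA} < V$; both hold for a sufficiently large absolute constant $c$ under the standing hypothesis $\epsilon \le 1/(2q)$. I would also record in passing that the guarantee holds with high probability, inherited from \autoref{thm:lewis_sampling}, and that one should precondition only when $m$ is much larger than $\OOt(n^{\max\{p/2,1\}}/\epsilon^2)$, since otherwise \autoref{thm:algo_qpnorm_better} is already at least as fast.
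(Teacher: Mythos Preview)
Your proposal is correct and follows essentially the same approach as the paper: precondition with Lewis weight sampling, observe that $\AA'$ remains non-negative with at most $r$ nonzeros per row, run \autoref{alg:qpnorm2} on $\AA'$, and add up the costs. If anything, your treatment is more careful than the paper's own proof, which is written as if for the optimization version (it bounds $\psnorm[p]{\AA\vx}$ directly against $\qpnorm{\AA}$) and does not explicitly discuss the infeasibility branch; your adjustment $V' = V/(1+\epsilon_1)$ and the two-case transfer argument are exactly what is needed to match the decision-problem statement of the corollary.
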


\section{From Approximate Decision to Approximate Optimization}
\label{sec:approximation}
Our algorithm solves an approximate decision problem. When performing the binary search for the guess value $V$, a precision of $\epsilon$ is not required. We provide here a standard relaxation that transforms an approximate decision algorithm into an approximate optimization one.

First, notice that with $\kappa := \max_{i,j} \AA_{i,j}$, we have for any $q \geq p \geq 1$, $\kappa \leq \qpnorm{\AA} \leq \kappa n^{1-\frac{1}{q}} m^{\frac{1}{p}}$. Given a guess value $V$ and a precision $\epsilon$, \autoref{thm:algo_qpnorm_better} gives that either $\qpnorm{\AA}$ is in $[(1-\epsilon)V, +\infty]$ or $[0,V]$. We initialize our search interval as $[\kappa, n^{1-\frac{1}{q}} m^{\frac{1}{p}} \kappa]$.

Given a search interval $[L,U]$, we let $V = \sqrt{L U}$ and $\tilde{\epsilon} := \min \lbrace \frac{1}{2q}, \left(\frac{U}{L}\right)^{\frac{1}{6}}-1 \rbrace$. We run \autoref{alg:qpnorm2} with $\epsilon = \tilde{\epsilon}$ and $V$ as the guess value. If the algorithm returns a vector $\vx$ such that $\psnorm[p]{\AA \vx}/\psnorm[q]{\vx} \geq (1-\tilde{\epsilon})V$, we update $L = V(1-\tilde{\epsilon})$, otherwise we update $U = V$. We repeat this process until $U/L \leq 1/(1-\frac{\epsilon}{4})$. When this happens, we chose $V = L$ and use precision $\frac{\epsilon/4}{1+\epsilon/4}$. Since we maintain that $L$ is feasible and $U$ is infeasible, we obtain $\vx$ such that 
\begin{equation*}
    \frac{\psnorm[p]{\AA \vx}}{\psnorm[q]{\vx}} \geq \left(1-\frac{\epsilon/4}{1+\epsilon
    /4}\right)L \geq \frac{1-\frac{\epsilon}{4}}{1+\frac{\epsilon}{4}} U \geq (1-\epsilon)\qpnorm{\AA}.
\end{equation*}

We now analyse the cost of the search. While $\frac{U}{L} > \left(1+\frac{1}{2q} \right)^6$, we invoke the algorithm with precision $\frac{1}{2q}$ and $\log U/L$ is divided by a constant fraction. Hence, this process is repeated at most $\OO(\log \log (n^{1-\frac{1}{q}} m^{\frac{1}{p}}))$ times. Moreover, when $\frac{U}{L} \leq \left(1+\frac{1}{2q} \right)^6$, we use precision $\exp((\log U/L)/6) - 1 = \Theta(\log U/L)$. Since the running time of an iteration is proportional to $\frac{1}{\epsilon}$, the total running time of this part of the search is dominated by the last one with precision $\tilde{\epsilon} = \OO(\epsilon)$. Hence, the total running time is $\OO\left(\mathcal{T}(\frac{1}{2q}) \log \log (n^{1-\frac{1}{q}} m^{\frac{1}{p}}) + \mathcal{T}(\epsilon)\right)$ where $\mathcal{T}(\epsilon)$ is the running time of \autoref{alg:qpnorm2} with precision $\epsilon$.

\section{Computing an Optimal Oblivious Routing for a Fixed Monotonic Norm}
\label{sec:cpmrouting}

We explain in \autoref{apps:mcf_definitions} what multicommodity flows are and the assumptions we are making. Our goal is to find an optimal linear oblivious routing, i.e. a linear oblivious routing $\OBL^*$ such that 
\begin{equation*}
    \forall \OBL \in \Rlo, \CR(\OBL^*) \leq \CR(\OBL).
\end{equation*}

In the following, we assume the graph to be directed. Undirected graphs can be transformed in directed graphs, hence our results are also valid for undirected graphs (c.f. \autoref{apps:undirected_directed} for more details).

\begin{theorem}
    \label{thm:main_result_ccg}
    For any directed or undirected graph $G=(V,E)$ and any cost function that is a monotonic norm on the load vector, it is possible to compute a linear oblivious routing $\OBL$ such that 
    \begin{equation*}
        \CR(\OBL) \leq \min_{\OBL} \CR(\OBL) + \epsilon
    \end{equation*}
    in time $\OOt(n^6 m^3 \log \frac{n}{\epsilon})$ with high probability in $n$.
\end{theorem}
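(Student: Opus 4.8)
The plan is to cast the search for an optimal oblivious routing as a convex optimization problem over the routing, reduce it to a saddle point of a bilinear function, and solve it with a cutting‑plane method; the running time $\OOt(n^6m^3)$ will come out of running a state‑of‑the‑art cutting‑plane solver in dimension $\Theta(n^2m)$. Since undirected instances reduce to directed ones (\autoref{apps:undirected_directed}), I work with a directed $G=(V,E)$, $|V|=n$, $|E|=m$, and write $\|\cdot\|$ for the monotone cost norm on $\rr^E$. A linear oblivious routing is a matrix $\OBL$ sending each single‑commodity unit demand to the load it induces; validity is a system of linear (flow‑conservation) constraints, so $\Rlo$ is a polytope in $\rr^N$ with $N=\OO(n^2m)$. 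With $\opt(\vd)=\min\{\|\text{load}(\vf)\|:\vf\text{ routes }\vd\}$ (convex and positively homogeneous in $\vd$), we have $\CR(\OBL)=\max_{\vd\neq\zeros}\|\OBL\vd\|/\opt(\vd)$, which is convex in $\OBL$ because $\|\cdot\|$ is a norm; hence finding an optimal routing is the convex program $\min_{\OBL\in\Rlo}\CR(\OBL)$.

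First I would reduce this to a bilinear saddle point. Dualizing the cost norm, $\|\OBL\vd\|=\max_{\dnorm{\vy}\le1}\langle\vy,\OBL\vd\rangle$, and normalizing demands by $\opt$, one gets $\CR(\OBL)=\max\{\langle\OBL,\vy\vd^T\rangle_F:\dnorm{\vy}\le1,\ \opt(\vd)\le1,\ \vd\ge\zeros\}$, i.e.\ the support function, evaluated at $\OBL$, of the convex body $\mathcal{M}=\conv\{\vy\vd^T:\dnorm{\vy}\le1,\ \opt(\vd)\le1,\ \vd\ge\zeros\}$. Therefore
\[
\min_{\OBL\in\Rlo}\CR(\OBL)=\min_{\OBL\in\Rlo}\ \max_{\QQ\in\mathcal{M}}\ \langle\OBL,\QQ\rangle_F,
\]
a min–max of a bilinear function over two convex bodies. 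Next I would feed $\OBL\mapsto\CR(\OBL)=\max_{\QQ\in\mathcal{M}}\langle\OBL,\QQ\rangle_F$ to a cutting‑plane solver: after capping the objective by the value of a known $\OO(\log n)$‑competitive routing (e.g.\ \cite{ER09}) the relevant feasible region is bounded and the objective is Lipschitz, so the solver uses $\OOt(N)$ oracle calls plus $\OOt(N^2)$ linear‑algebra work per call, for a total of $\OOt(N^3)=\OOt(n^6m^3)$, provided each oracle call runs in $\OOt(N^2)=\OOt(n^4m^2)$. A final binary search over the target competitive ratio with relative precision $\epsilon$ (the optimum lies in $[1,\OO(\log n)]$, so relative and additive errors are interchangeable up to rescaling) yields the additive‑$\epsilon$ guarantee and the $\log(n/\epsilon)$ factor.

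The hard part will be the separation/subgradient oracle: given $\OBL$ and a target $\lambda$, certify $\CR(\OBL)\le\lambda$, or output a demand $\vd^*$ with $\|\OBL\vd^*\|>\lambda\,\opt(\vd^*)$, which furnishes a separating hyperplane because $\OBL\mapsto\|\OBL\vd^*\|-\lambda\,\opt(\vd^*)$ is convex in $\OBL$. I would implement this by a second, smaller convex‑optimization step over the $\OO(n^2)$‑dimensional demand space (together with its dual certificate), exploiting that $\opt(\vd)$ and a subgradient of it are efficiently computable: by LP duality $\opt(\vd)=\max\{\sum_{s,t}\vd_{st}\,\mathrm{dist}_{\vz}(s,t):\vz\ge\zeros,\ \dnorm{\vz}\le1\}$ — a minimum‑cost routing whose edge‑only costs make it decompose into shortest‑path computations, the step that generalizes the $\ell_\infty$ LP of Azar–Cohen–Fiat–Kaplan–Räcke. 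The randomized shortest‑path / flow (and $\ell_p$‑norm approximation) primitives invoked here are the source of the ``with high probability in $n$'' qualifier.

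The main obstacle is precisely this oracle. Computing, or merely certifying against $\lambda$, the competitive ratio of a \emph{fixed} routing under an \emph{arbitrary} monotone cost norm is a genuine min–max problem, not a closed‑form quantity: one cannot simply enumerate the $\OO(n^2)$ single‑commodity demands, because two commodities whose optimal routings are edge‑disjoint but whose $\OBL$‑routings overlap are strictly worse together than either alone. So one must actually solve the inner convex program over demands, extract the worst demand as a valid cut (equivalently, the subgradient of $\CR$ at $\OBL$), and do so fast enough that $\OOt(N)$ such calls still fit inside the $\OOt(n^6m^3)$ budget — which is exactly why the reduction to a bilinear saddle point and the $\opt$‑via‑shortest‑paths oracle are both needed, and where the technical effort of the proof concentrates.
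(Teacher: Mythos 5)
Your reformulation is essentially the paper's: dualizing the cost norm and convexifying the rank-one products $\vy\vd^T$ with $\dnorm{\vy}\le 1$, $\opt(\vd)\le 1$ is exactly the set $\mathcal{Y}=\conv\lbrace \GG\DD^T : \GG\in S_*, \DD\in\opt_{\leq 1}\rbrace$ of \autoref{lem:main_result_ccg_2}, and the problem indeed becomes a bilinear saddle point. Where you diverge — and where the gap lies — is in how the saddle point is solved. You run a cutting-plane method on the \emph{outer} function $\OBL\mapsto\CR(\OBL)$ and therefore need, at every iteration, an oracle that either certifies $\CR(\OBL)\le\lambda$ or produces a violating demand. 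That oracle is $\max\lbrace\norm{\OBL\vd} : \opt(\vd)\le 1,\ \vd\ge\zeros\rbrace$, i.e.\ maximization of a norm over a convex body (equivalently, bilinear maximization over $(\vy,\vd)$); it is \emph{not} a convex program in the tractable sense, and being able to evaluate $\opt(\vd)$ and its subgradient via LP duality and shortest paths does not make the outer maximization over $\vd$ solvable. You acknowledge this is the crux ("the main obstacle is precisely this oracle") but give no algorithm for it, let alone one running in the $\OOt(n^4m^2)$ per-call budget your $\OOt(n^6m^3)$ accounting requires; the Azar et al.\ trick of enumerating linear objectives works only for $\ell_\infty$, as you note. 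The paper's route avoids ever solving this inner problem: it treats $\QQ\in\mathcal{Y}$ as a second player and invokes the convex-concave game solver of \autoref{thm:concav_convex} on the joint space $\mathcal{X}\times\mathcal{Y}\subseteq\rr^{m\times n(n-1)}\times\rr^{m\times n(n-1)}$, where the payoff $\langle\AA,\QQ\rangle_F$ is bilinear (so gradients are trivial), obtaining a pair $(\widehat{\AA},\widehat{\QQ})$ with small duality gap; \autoref{lem:main_result_ccg_2} then converts $\widehat{\AA}$ into a routing with $\CR(\widetilde{\OBL})\le\min_{\OBL}\CR(\OBL)+\epsilon$, and the $(a+b)^3=\OOt(n^6m^3)$ term of the solver gives the stated time.

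A second, smaller omission: you apply the cutting-plane guarantee directly over $\Rlo$, but $\Rlo$ is cut out by equality (flow-conservation) constraints and contains no ball of positive radius, so the $\log(R/r)$-type runtime bounds do not apply as stated. The paper handles this explicitly by relaxing $\Rlo$ to the full-dimensional set $\mathcal{X}$ (inequalities, entries in $[0,2]$), which contains a ball of radius $m^{-\OO(1)}$, and then rounding the solution $\widehat{\AA}\in\mathcal{X}$ back down to a genuine routing; the rounding is harmless precisely because the cost norm is monotone (\autoref{claim:relaxation_routing} and the discussion after \autoref{lem:main_result_ccg_2}). Your proposal would need both this relaxation-and-rounding step and, above all, an actual implementation of the worst-demand oracle (or, as in the paper, a reformulation that dispenses with it) before the $\OOt(n^6m^3\log\frac{n}{\epsilon})$ claim is justified.
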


To construct the linear oblivious routing from \autoref{thm:main_result_ccg}, we will use the following lemma where $\mathcal{X}$ represents a relaxation of the set of linear routings and $\mathcal{Y}$ a relaxation of the set of demand vectors.
\begin{lemma}
    \label{lem:main_result_ccg}
    There exist two convex sets $\mathcal{X},\mathcal{Y} \subseteq \rr^{m \times n(n-1)}$ such that $\Rlo \subseteq \mathcal{X}$ and given $\widehat{\AA} \in \mathcal{X}$ and $\widehat{\QQ} \in \mathcal{Y}$ satisfying
    \begin{equation*}
        \max_{\QQ \in \mathcal{Y}} \langle \widehat{\AA}, \QQ \rangle_F - \min_{\AA \in \mathcal{X}} \langle \AA, \widehat{\QQ} \rangle_F \leq \epsilon,
    \end{equation*}
    we can construct in polynomial time a linear oblivious routing $\widetilde{\OBL}$ such that 
    \begin{equation*}
        \CR(\widetilde{\OBL}) \leq \min_{\OBL} \CR(\OBL) + \epsilon.
    \end{equation*}
\end{lemma}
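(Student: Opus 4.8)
The plan is to set up the competitive-ratio minimization as a min–max (saddle-point) problem over convex relaxations, exhibit the two sets $\mathcal{X}$ and $\mathcal{Y}$ explicitly, and then argue that an approximate saddle point yields an approximate optimal routing. First, I would recall the definition of the competitive ratio of a linear oblivious routing $\OBL$: it is the worst-case ratio, over demand vectors $d$, between the cost of the flow that $\OBL$ routes for $d$ and the optimal cost for $d$. Since the cost is a fixed monotone norm $\|\cdot\|$ of the load vector, and the optimal cost $\mathrm{OPT}(d)$ is a fixed positively-homogeneous convex function of $d$, the competitive ratio is $\sup_{d \neq 0} \|\,(\text{load of } \OBL \text{ on } d)\,\| / \mathrm{OPT}(d)$. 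The standard trick (as in Azar–Cohen–Fiat–Kaplan–Räcke and Englert–Räcke) is that $\OBL$ is a single linear operator, so the load it produces can be written as a bilinear expression in $\OBL$ (viewed as a matrix in $\rr^{m\times n(n-1)}$, indexed by edges and source–sink pairs) and $d$; by homogeneity one may normalize the "adversarial" quantity, and by LP/norm duality one turns the two nested optimizations (over the routing, and over the demand together with the dual witness for the cost norm) into a bilinear objective $\langle \AA, \QQ\rangle_F$.

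Concretely, I would let $\mathcal{X}$ be the set of matrices $\AA$ such that each column is a valid unit flow between its designated source–sink pair — i.e. satisfies flow conservation and the demand-$1$ boundary condition — with the capacity/cost constraints relaxed so that the set is convex (this is where $\Rlo \subseteq \mathcal{X}$: every genuine linear oblivious routing, restricted to routing one unit between each pair, gives such a matrix). I would let $\mathcal{Y}$ encode a demand vector together with the dual certificate realizing both the cost norm $\|\cdot\|$ on the load and the optimal offline routing cost $\mathrm{OPT}(d)$; concretely $\QQ$ packages a demand $d$, a subgradient $z$ of the norm at the load, and an optimal dual solution for $\mathrm{OPT}(d)$, with the appropriate simplex-type normalization so that $\langle \AA,\QQ\rangle_F$ equals (cost of $\AA$ routing $d$) divided, in the normalized sense, by $\mathrm{OPT}(d)$. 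Monotonicity of the norm is used to ensure the subgradient $z$ can be taken non-negative, which keeps $\mathcal{Y}$ inside the right cone and makes the bilinear form behave correctly. With these definitions, $\min_{\AA\in\mathcal{X}} \max_{\QQ\in\mathcal{Y}} \langle \AA,\QQ\rangle_F = \min_{\OBL} \CR(\OBL)$, and since both sets are convex and compact and the objective is bilinear, Sion's minimax theorem gives equality of the min–max and max–min values.

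Given the hypothesis, namely $\widehat\AA\in\mathcal{X}$, $\widehat\QQ\in\mathcal{Y}$ with $\max_{\QQ}\langle\widehat\AA,\QQ\rangle_F - \min_{\AA}\langle\AA,\widehat\QQ\rangle_F \le \epsilon$, I would observe that $\max_{\QQ}\langle\widehat\AA,\QQ\rangle_F \ge \min_{\AA}\max_{\QQ}\langle\AA,\QQ\rangle_F = \CR^\star$ and that $\min_{\AA}\langle\AA,\widehat\QQ\rangle_F \le \min_{\AA}\max_{\QQ}\langle\AA,\QQ\rangle_F = \CR^\star$, so the two quantities sandwich $\CR^\star$; hence $\max_{\QQ}\langle\widehat\AA,\QQ\rangle_F \le \CR^\star + \epsilon$. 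This says the relaxed routing $\widehat\AA$ has "relaxed competitive ratio" at most $\CR^\star+\epsilon$. The final step is a rounding argument: from $\widehat\AA\in\mathcal{X}$ — a collection of fractional unit flows, one per source–sink pair — one assembles an honest linear oblivious routing $\widetilde\OBL$ (this is essentially just reading off the per-pair flows as the columns of the routing operator, possibly with a mild scaling to restore exact feasibility lost in the relaxation), and one checks that $\CR(\widetilde\OBL)$ is controlled by the relaxed objective value, giving $\CR(\widetilde\OBL)\le \CR^\star+\epsilon$ (absorbing any relaxation slack into $\epsilon$, or noting the relaxation was tight for unit flows). All of these steps are polynomial-time.

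The main obstacle, I expect, is getting the relaxations $\mathcal{X}$ and $\mathcal{Y}$ exactly right so that (i) $\mathcal{X}$ is genuinely convex yet the bilinear optimum over $\mathcal{X}\times\mathcal{Y}$ still equals the true minimum competitive ratio rather than a strict relaxation, and (ii) the reconstruction of $\widetilde\OBL$ from $\widehat\AA$ loses nothing beyond the allotted $\epsilon$. In particular, correctly encoding $\mathrm{OPT}(d)$ — itself the value of an inner optimization (a min-cost multicommodity flow under the same cost norm) — inside the compact convex set $\mathcal{Y}$ via its dual, while keeping the overall form bilinear, is the delicate part; the monotone-norm assumption and LP duality for multicommodity flow are the tools that make it go through.
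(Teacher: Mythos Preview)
Your outline follows the same saddle-point framework as the paper, and your sandwich argument is exactly the paper's closing step. The differences are in the concrete constructions of $\mathcal{X}$, $\mathcal{Y}$, and the rounding, which the paper resolves more simply than you anticipate.

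For $\mathcal{X}$, the paper does not relax capacity constraints (there are none in the problem); instead it relaxes the flow-conservation \emph{equalities} to inequalities---each column must push net flow at least $1$ out of its source and have non-negative divergence at every other vertex---and adds an entrywise upper bound $\AA_{e,(i,j)}\le 2$ so that $\mathcal{X}$ is bounded and contains an interior ball. This choice is what makes the rounding lossless: from $\widehat{\AA}\in\mathcal{X}$ one obtains $\widetilde{\OBL}\in\Rlo$ by \emph{reducing} coordinates until the equalities hold, and since every $\QQ\in\mathcal{Y}$ is entrywise non-negative one gets $\langle\widetilde{\OBL},\QQ\rangle_F\le\langle\widehat{\AA},\QQ\rangle_F$ for all $\QQ$, hence $\CR(\widetilde{\OBL})\le\max_{\QQ}\langle\widehat{\AA},\QQ\rangle_F$ with no extra slack.

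For $\mathcal{Y}$, the paper does not try to encode a dual of $\mathrm{OPT}(d)$ inside $\QQ$; that would not keep the objective bilinear. It simply normalizes the demand side to $\opt_{\leq 1}=\{\DD:\mathrm{OPT}(\DD)\le 1\}$, writes $\|\AA\DD\|=\max_{\GG\in S_*}\langle\AA,\GG\DD^T\rangle_F$ with $S_*$ the set of non-negative matrices of dual norm at most $1$ (non-negativity courtesy of monotonicity), and sets $\mathcal{Y}=\conv\{\GG\DD^T:\DD\in\opt_{\leq 1},\,\GG\in S_*\}$; the convex hull is needed because the product set is not convex, and linearity in $\QQ$ ensures taking the hull does not change the max. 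Finally, Sion's theorem is unnecessary: $\min_{\AA\in\mathcal{X}}\langle\AA,\widehat{\QQ}\rangle_F\le\CR(\OBL^\star)$ follows just by plugging in $\OBL^\star\in\Rlo\subseteq\mathcal{X}$.
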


To see how \autoref{lem:main_result_ccg} is useful, notice that using a convex-concave game solver we can find $\widehat{\AA}$ and $\widehat{\QQ}$ satisfying the condition of \autoref{lem:main_result_ccg}. The next theorem is a convex-concave game solver from \cite{cpm} that we will use to prove the running time of \autoref{thm:main_result_ccg}.

\begin{theorem}[Theorem C.9 from \cite{cpm}]
    \label{thm:concav_convex}
    Given convex sets $\mathcal{X} \subset B(0, R) \subset \rr^a$ and $\mathcal{Y} \subset B(0, R) \subset \rr^b$ such that both $\mathcal{X}$ and $\mathcal{Y}$ contain a ball of radius $r$. Let $H(\vx, \vy) : \mathcal{X} \times \mathcal{Y} \rightarrow \rr$ be an $L$-Lipschitz function that is convex in $\vx$ and concave in $\vy$. For any $0 < \epsilon \leq \frac{1}{2}$, we can find $(\hat{\vx}, \hat{\vy})$ such that 
    \begin{equation*}
        \max_{\vy \in \mathcal{Y}} H (\hat{\vx}, \vy) - \min_{\vx \in \mathcal{X}} H (\vx,\hat{\vy}) \leq \epsilon Lr
    \end{equation*}
    in time 
    \begin{equation*}
        \OO\left((a + b)^3 \log \left(\frac{a+b}{\epsilon} \frac{R}{r} \right) + (a + b) \mathcal{T} \log \left(\frac{a+b}{\epsilon} \frac{R}{r}\right)\right)
    \end{equation*}
    with high probability in $a + b$ where $\mathcal{T}$ is the cost of computing sub-gradient $\nabla f$.
\end{theorem}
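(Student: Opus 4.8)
The plan is to recast the saddle-point problem as a monotone variational inequality on the product domain and then run the cutting-plane method of \cite{cpm} on it, reading off the duality gap from the recorded cuts. Throughout, write $\mathcal{Z} := \mathcal{X}\times\mathcal{Y}\subseteq\rr^{a+b}$; since $\mathcal{X},\mathcal{Y}\subseteq B(0,R)$ we get $\mathcal{Z}\subseteq B(0,\sqrt2\,R)$, and since each of $\mathcal{X},\mathcal{Y}$ contains a ball of radius $r$, the set $\mathcal{Z}$ contains a Euclidean ball of radius $r$ in $\rr^{a+b}$. Separation for $\mathcal{Z}$ is obtained coordinate-block-wise from separation for $\mathcal{X}$ and $\mathcal{Y}$, which we fold into the oracle cost (so each query costs $O(\mathcal{T})$).

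First I would build the operator. Using $L$-Lipschitzness, at every $\vz=(\vx,\vy)\in\mathcal{Z}$ pick a subgradient $\vg_{\vx}\in\partial_{\vx}H(\vx,\vy)$ and a supergradient $\vg_{\vy}\in\partial_{\vy}H(\vx,\vy)$ with $\|\vg_{\vx}\|,\|\vg_{\vy}\|\le L$, and set $F(\vz):=(\vg_{\vx},-\vg_{\vy})$, so $\|F(\vz)\|\le\sqrt2\,L$. Two elementary inequalities drive everything. \emph{(i) Monotonicity:} adding the convexity inequality for $H(\cdot,\vy)$ and the concavity inequality for $H(\vx',\cdot)$ at the pair $\vz,\vz'$, and symmetrically at $\vz',\vz$, gives $\langle F(\vz)-F(\vz'),\vz-\vz'\rangle\ge 0$. \emph{(ii) Gap lower bound:} from the subgradient inequalities at $\vz$ alone, for every $\vz'=(\vx',\vy')\in\mathcal{Z}$,
\[
\langle F(\vz),\vz-\vz'\rangle \;\ge\; H(\vx,\vy') - H(\vx',\vy).
\]
A saddle point of the continuous convex–concave $H$ over the compact convex sets $\mathcal{X},\mathcal{Y}$ exists (Sion), and it is a solution $\vz^{\star}$ of the variational inequality $\langle F(\vz^{\star}),\vz-\vz^{\star}\rangle\ge 0$ for all $\vz\in\mathcal{Z}$; by (i), $\langle F(\vz),\vz-\vz^{\star}\rangle\ge\langle F(\vz^{\star}),\vz-\vz^{\star}\rangle\ge 0$, so every cut we generate below is valid at $\vz^{\star}$.

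Next, run the cutting-plane solver of \cite{cpm} on $\mathrm{VI}(F,\mathcal{Z})$. When it queries $\vz_k$: if $\vz_k\notin\mathcal{Z}$ return a separating hyperplane (a feasibility cut, which never excises the radius-$r$ ball inside $\mathcal{Z}$, hence cannot by itself collapse the localization region); otherwise record $(\vz_k,F(\vz_k))$ and return the half-space $\{\vz:\langle F(\vz_k),\vz-\vz_k\rangle\le 0\}$. Taking $T=\OO\!\big((a+b)\log(\tfrac{a+b}{\epsilon}\tfrac{R}{r})\big)$ iterations, with $O((a+b)^2)$ amortized arithmetic and one oracle call per iteration, yields exactly the stated running time $\OO\!\big((a+b)^3\log(\cdot)+(a+b)\mathcal{T}\log(\cdot)\big)$. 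The guarantee I would extract from the method is a convex combination: nonnegative weights $\lambda_k$ with $\sum_k\lambda_k=1$, supported on the recorded (operator-cut) iterations, such that $\sum_k\lambda_k\langle F(\vz_k),\vz_k-\vz'\rangle\le\epsilon L r$ for every $\vz'\in\mathcal{Z}$ — i.e., the recorded cuts certify $O(\rho L)$-approximate feasibility where $\rho\lesssim\epsilon r$ is the residual localization radius after $T$ steps. Setting $\hat\vx:=\sum_k\lambda_k\vx_k$, $\hat\vy:=\sum_k\lambda_k\vy_k$ and chaining (ii) with convexity of $H(\cdot,\vy')$ and concavity of $H(\vx',\cdot)$ gives, for all $(\vx',\vy')\in\mathcal{Z}$,
\[
H(\hat\vx,\vy')-H(\vx',\hat\vy)\;\le\;\sum_k\lambda_k\langle F(\vz_k),\vz_k-\vz'\rangle\;\le\;\epsilon L r,
\]
and taking the supremum over $\vx'\in\mathcal{X}$ and $\vy'\in\mathcal{Y}$ yields $\max_{\vy\in\mathcal{Y}}H(\hat\vx,\vy)-\min_{\vx\in\mathcal{X}}H(\vx,\hat\vy)\le\epsilon L r$, as required.

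The main obstacle is the last guarantee: turning ``the localization region has shrunk to radius $\lesssim\epsilon r$ after $\OO((a+b)\log(\cdot))$ iterations'' into the explicit convex multipliers $\lambda_k$ with the \emph{uniform} bound over all $\vz'\in\mathcal{Z}$. This is precisely where the internal potential-function / volumetric-center analysis of the cutting-plane method of \cite{cpm} is invoked (the certificate is produced by its dual/weight variables); once it is granted, the remaining steps are the routine manipulations above. Note also that the factor $r$ in the target accuracy $\epsilon L r$ is exactly what the $R/r$ term inside the logarithm of the iteration count converts to at the output: $T$ must be large enough that $\rho\lesssim\epsilon r$, giving $\log(R/(\epsilon r))$ per dimension, and then the gap is $O(\rho L)\le\epsilon L r$.
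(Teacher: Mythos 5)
This statement is not proven in the paper at all: it is quoted verbatim as Theorem C.9 of \cite{cpm} and used as a black box, so there is no internal proof to compare against. Your sketch of how such a result is obtained is sensible in outline, and the elementary parts are correct: the operator $F(\vz)=(\vg_{\vx},-\vg_{\vy})$ is monotone, the inequality $\langle F(\vz),\vz-\vz'\rangle \geq H(\vx,\vy')-H(\vx',\vy)$ follows from the sub/supergradient inequalities, and the averaging step that converts multipliers $\lambda_k$ into the duality-gap bound for $(\hat{\vx},\hat{\vy})$ is the standard chaining via convexity in $\vx$ and concavity in $\vy$.

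The genuine gap is exactly the step you flag as ``the main obstacle'' and then invoke on faith: the existence, after only $T=\OO\left((a+b)\log\left(\frac{a+b}{\epsilon}\frac{R}{r}\right)\right)$ oracle calls with $\OO((a+b)^2)$ amortized work per call, of explicit convex weights $\lambda_k$ on the gradient-cut iterates satisfying $\sum_k\lambda_k\langle F(\vz_k),\vz_k-\vz'\rangle\leq\epsilon L r$ \emph{uniformly} over $\vz'\in\mathcal{X}\times\mathcal{Y}$. Passing from ``the localization region has small volume/radius'' to such a dual certificate is not a routine manipulation: it requires a Farkas/LP-duality (or potential-function) argument showing that if no such combination existed, a scaled copy of the inner radius-$r$ ball would survive all recorded cuts, and it must be carried out inside the specific leverage-score-maintenance analysis of \cite{cpm} to preserve the claimed $(a+b)^3$ total arithmetic (a naive Vaidya- or ellipsoid-style implementation would not give that bound). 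Since this certificate extraction is precisely the content of the cited theorem, your argument as written assumes the substance of what it sets out to prove; as a self-contained proof it is incomplete, although as a reading of how \cite{cpm} obtains the result it is a reasonable roadmap. Also note two small inaccuracies worth fixing if you pursue this: the product domain $\mathcal{X}\times\mathcal{Y}$ lies in $B(0,\sqrt{2}R)$ and contains a ball of radius $r$ only after taking the product of the two inner balls (which is fine, but the radius argument should be stated for the product, not inherited separately), and the oracle cost per iteration should also account for the separation oracles of $\mathcal{X}$ and $\mathcal{Y}$, which you fold into $\mathcal{T}$ without justification.
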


\subsection{Relaxing the Problem}

\paragraph{Relaxing the set of linear routings.}

Notice that $\Rlo$ as defined in \autoref{def:routing} is convex, however it does not contain a ball of radius $r$ for $r >0$. Hence, we relax the problem to solving on $\mathcal{X}$ where 
\begin{equation*}
    \mathcal{X} := \left\lbrace \AA \in \rr^{m \times n(n-1)} :
    \begin{array}{l l c r}
        & \forall e \in E, (i,j) \in V \times V & \AA_{e,(ij)} &\geq 0 \\
        \wedge & \forall e \in E, (i,j) \in V \times V & \AA_{e,(ij)} &\leq 2 \\
        \wedge & \forall (i,j) \in V \times V & \sum_{e \in \OUT(i)} \AA_{e,(ij)} - \sum_{e \in \IN(i)} \AA_{e,(ij)} &\geq 1 \\
        \wedge & \forall (k,i,j) \in V \times V \times V & \sum_{e \in \OUT(k)} \AA_{e,(ij)} - \sum_{e \in \IN(k)} \AA_{e,(ij)} &\geq 0 \\
    \end{array}
    \right\rbrace.
\end{equation*}

We have that $\Rlo \subseteq \mathcal{X}$, and $\mathcal{X}$ contains a ball of radius of $r = m^{-\OO(1)}$ and is inside $B(0,2)$. The following claim is a result of $\norm{.}$ being monotone.

\begin{claim}
    \label{claim:relaxation_routing}
    Given a demand matrix $\DD$, we have 
    \begin{equation*}
        \min_{\OBL \in \Rlo} \norm{\OBL \DD} = \min_{\AA \in \mathcal{X}} \norm{\AA \DD}.
    \end{equation*}
\end{claim}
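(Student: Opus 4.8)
The plan is to prove the equality by establishing the two inequalities separately. The direction $\min_{\AA \in \mathcal{X}} \norm{\AA \DD} \leq \min_{\OBL \in \Rlo} \norm{\OBL \DD}$ is immediate from the inclusion $\Rlo \subseteq \mathcal{X}$: every feasible linear routing is a competitor in the relaxed minimization, so the relaxed minimum cannot be larger. The work is in the reverse direction, namely showing that any $\AA \in \mathcal{X}$ can be ``projected'' down to an actual linear routing $\OBL \in \Rlo$ without increasing $\norm{\AA \DD}$. Since $\norm{\cdot}$ is a monotone norm on the load vector (i.e. $\vu \leq \vv$ entrywise implies $\norm{\vu} \leq \norm{\vv}$), it suffices to exhibit, for each $\AA \in \mathcal{X}$, a routing $\OBL \in \Rlo$ with $\OBL \DD \leq \AA \DD$ entrywise (for the fixed nonnegative demand matrix $\DD$, or more carefully for the load vector each column induces).

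The key step is the projection $\AA \mapsto \OBL$. An element $\AA \in \mathcal{X}$ differs from a genuine routing in that the flow-conservation constraints are relaxed to inequalities: for the commodity $(i,j)$, the net flow out of the source is at least $1$ (rather than exactly $1$) and the net flow out of every other node is at least $0$ (rather than exactly $0$, with the sink absorbing exactly $1$). I would argue that any such ``super-feasible'' flow can be pruned down to a true unit $i$-$j$ flow by removing flow along directed cycles and along paths that carry excess, i.e., by a standard flow-decomposition argument applied columnwise: decompose the $(ij)$-column of $\AA$ into $i$-$j$ paths, cycles, and surplus paths ending at intermediate nodes, keep exactly one unit of $i$-$j$ path flow, and discard the rest. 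This only decreases each edge coordinate, hence only decreases the load vector entrywise, hence by monotonicity only decreases $\norm{\AA \DD}$. The resulting $\OBL$ satisfies the exact conservation constraints, is nonnegative, and routes one unit per commodity, so $\OBL \in \Rlo$; one should double check the upper bound $\AA_{e,(ij)} \leq 2$ in the definition of $\mathcal{X}$ is not needed here (it is there only to bound $\mathcal{X}$ inside a ball), since pruning only shrinks entries.

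The main obstacle I anticipate is making the flow-decomposition/pruning argument clean in the presence of multiple commodities and the specific relaxed constraint system of $\mathcal{X}$: one must verify that pruning column $(ij)$ of $\AA$ independently for each commodity yields something still in $\Rlo$ (this is fine, since $\Rlo$'s constraints are per-commodity) and that the load vector $\AA \DD$ really is monotone in each column of $\AA$ — which it is, because $\DD$ is nonnegative and the load on edge $e$ is $\sum_{(ij)} \AA_{e,(ij)} \DD_{(ij)}$, a nonnegative combination. I would also need to confirm that the minima are attained (compactness of $\mathcal{X}$ and of $\Rlo$, both bounded polytopes), so that writing $\min$ rather than $\inf$ is justified. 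Modulo these checks the argument is short.
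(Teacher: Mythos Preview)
Your proposal is correct and matches the paper's approach: the paper does not give a detailed proof but simply states that the claim ``is a result of $\norm{\cdot}$ being monotone,'' and later (in the discussion following Lemma~\ref{lem:main_result_ccg_2}) spells out exactly the projection step you describe, namely that from any $\widehat{\AA}\in\mathcal{X}$ one constructs a routing $\widetilde{\OBL}\in\Rlo$ ``by reducing the coordinates of $\widehat{\AA}$,'' with monotonicity guaranteeing the cost does not increase. Your flow-decomposition pruning argument is precisely how one makes this reduction rigorous, so there is no divergence from the paper's intended reasoning.
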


\paragraph{Relaxing the set of demand vectors.}

Let $\opt_{\leq 1}$ be the set of demand matrices $\DD$ such that $\min_{\OBL \in \Rlo} \norm{\OBL \DD} \leq 1$. Using \autoref{claim:relaxation_routing}, we have that $\opt_{\leq 1}$ can also be defined as the set of demand matrices $\DD$ such that $\min_{\AA \in \mathcal{X}} \norm{\AA \DD} \leq 1$. Since $\mathcal{X}$ only contains non-negative matrices, we have with $S_*$ the set of non-negative matrices with $\rr^{m \times K}$ coordinates and dual norm less than $1$ that $\norm{\AA \DD} = \max_{\GG \in S_*} \langle \AA, \GG \DD^T \rangle_F$. Let $\mathcal{Y}_1 := \lbrace \GG \DD^T : \GG \in S_*, \DD \in \opt_{\leq 1} \rbrace$, we want to find $\min_{\AA \in \mathcal{X}} \max_{\QQ \in \mathcal{Y}_1} \langle \AA,\QQ \rangle_F$. However, $\mathcal{Y}_1$ is not convex, we relax it to $\mathcal{Y} = \conv(\mathcal{Y}_1)$. The following claim is a consequence of $\QQ \mapsto \langle \AA, \QQ \rangle_F$ being linear.
\begin{claim}
    \label{claim:relaxation_demand}
    Given a matrix $\AA$ from $\mathcal{X}$, we have
    \begin{equation*}
        \max_{\QQ \in \mathcal{Y}_1} \langle \AA, \QQ \rangle_F = \max_{\QQ \in \mathcal{Y}} \langle \AA, \QQ \rangle_F.
    \end{equation*}
\end{claim}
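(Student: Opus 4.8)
The plan is to show both inequalities between the two maxima. Since $\mathcal{Y}_1 \subseteq \mathcal{Y} = \conv(\mathcal{Y}_1)$, the inequality $\max_{\QQ \in \mathcal{Y}_1} \langle \AA, \QQ \rangle_F \leq \max_{\QQ \in \mathcal{Y}} \langle \AA, \QQ \rangle_F$ is immediate. For the reverse inequality, I would invoke the standard fact that a linear functional over a convex hull attains its maximum at an extreme point — or more directly, that for any $\QQ \in \conv(\mathcal{Y}_1)$ written as $\QQ = \sum_{\ell} \lambda_\ell \QQ_\ell$ with $\QQ_\ell \in \mathcal{Y}_1$, $\lambda_\ell \geq 0$, $\sum_\ell \lambda_\ell = 1$, linearity of $\QQ \mapsto \langle \AA, \QQ \rangle_F$ gives $\langle \AA, \QQ \rangle_F = \sum_\ell \lambda_\ell \langle \AA, \QQ_\ell \rangle_F \leq \max_\ell \langle \AA, \QQ_\ell \rangle_F \leq \max_{\QQ' \in \mathcal{Y}_1} \langle \AA, \QQ' \rangle_F$. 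Taking the maximum over $\QQ \in \mathcal{Y}$ on the left yields the claim.

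Two small technical points need attention. First, one should make sure the maxima are actually attained (or else phrase everything with suprema): $\mathcal{Y}_1$ need not be closed, but $\mathcal{Y}$ is bounded since $\mathcal{X}$ is bounded and the relevant quantities in $\mathcal{Y}_1$ are controlled (the dual-norm ball for $\GG$ is bounded, and $\opt_{\leq 1}$ is bounded because $\qpnorm{\cdot}$-type norms are bounded below by a positive multiple of an entrywise quantity). So one can either pass to closures throughout, or simply observe that the supremum-version of the identity is all that is needed downstream in \autoref{lem:main_result_ccg}. Second, Carathéodory's theorem guarantees that every element of $\conv(\mathcal{Y}_1)$ is a finite convex combination, so the finite sum above is legitimate even without assuming $\mathcal{Y}_1$ finite.

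I do not anticipate a genuine obstacle here: this is the textbook observation that maximizing a linear function over a set equals maximizing it over the convex hull of that set. The only thing to be slightly careful about is consistency of the closed/not-closed convention with how $\mathcal{Y}$ is used in \autoref{lem:main_result_ccg} and in the invocation of \autoref{thm:concav_convex}, which requires $\mathcal{Y}$ convex and containing a small ball; since that theorem is applied to $\mathcal{Y}$ and not $\mathcal{Y}_1$, working with $\mathcal{Y} = \conv(\mathcal{Y}_1)$ (or its closure) throughout is harmless.
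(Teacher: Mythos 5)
Your argument is exactly the one the paper intends: the claim is stated as a direct consequence of the linearity of $\QQ \mapsto \langle \AA, \QQ \rangle_F$, and your two-inequality proof (inclusion $\mathcal{Y}_1 \subseteq \conv(\mathcal{Y}_1)$ for one direction, finite convex combinations for the other) is the standard way to spell that out. The additional remarks on attainment versus suprema are sensible but not needed for the role the claim plays in \autoref{lem:main_result_ccg_2}.
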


\paragraph{Solving the relaxed problem.}

\begin{lemma}
    \label{lem:main_result_ccg_2}
    Assume $\norm{.}$ is a monotonic norm, let $\dnorm{.}$ be its dual norm. Define 
    \begin{equation*}
        \mathcal{X} := \left\lbrace \AA \in \rr^{m \times n(n-1)} :
        \begin{array}{l l c r}
            & \forall e \in E, (i,j) \in V \times V & \AA_{e,(ij)} &\geq 0 \\
            \wedge & \forall e \in E, (i,j) \in V \times V & \AA_{e,(ij)} &\leq 2 \\
            \wedge & \forall (i,j) \in V \times V & \sum_{e \in \OUT(i)} \AA_{e,(ij)} - \sum_{e \in \IN(i)} \AA_{e,(ij)} &\geq 1 \\
            \wedge & \forall (k,i,j) \in V \times V \times V & \sum_{e \in \OUT(k)} \AA_{e,(ij)} - \sum_{e \in \IN(k)} \AA_{e,(ij)} &\geq 0 \\
        \end{array}
        \right\rbrace,
    \end{equation*}
    and 
    \begin{equation*}
        \mathcal{Y} := \conv \lbrace \GG \DD^T : \DD \in \opt_{\leq 1}, \GG \in S_* \rbrace.
    \end{equation*}
    Given $\widehat{\AA} \in \mathcal{X}$ and $\widehat{\QQ} \in \mathcal{Y}$ such that
    \begin{equation*}
        \max_{\QQ \in \mathcal{Y}} \langle \widehat{\AA}, \QQ \rangle_F - \min_{\AA \in \mathcal{X}} \langle \AA, \widehat{\QQ} \rangle_F \leq \epsilon,
    \end{equation*}
    we can construct in polynomial time a linear oblivious routing $\widetilde{\OBL}$ such that
    \begin{equation*}
        \CR(\widetilde{\OBL}) \leq \min_{\OBL} \CR(\OBL) + \epsilon.
    \end{equation*}
\end{lemma}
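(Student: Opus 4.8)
The plan is to recast the competitive ratio of a linear oblivious routing as a bilinear maximum over $\mathcal{Y}$, to round the point $\widehat{\AA} \in \mathcal{X}$ (which the overall algorithm obtains from a convex--concave game solver) into a genuine routing that is dominated entrywise by $\widehat{\AA}$, and then to chain this with the approximate saddle-point hypothesis. Only $\widehat{\AA}$ is used in the construction of $\widetilde{\OBL}$; $\widehat{\QQ}$ enters only the analysis.

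\emph{Step 1: $\CR$ as a bilinear max.} Fix $\OBL \in \Rlo$. Since $\opt$ and $\norm{\OBL\,\cdot}$ are positively homogeneous, $\CR(\OBL) = \max_{\DD \in \opt_{\leq 1}} \norm{\OBL \DD}$, where by \autoref{claim:relaxation_routing} the offline optimum $\min_{\OBL' \in \Rlo} \norm{\OBL' \DD}$ coincides with the quantity $\min_{\AA \in \mathcal{X}} \norm{\AA \DD}$ that defines $\opt_{\leq 1}$. The matrix $\OBL\DD$ is non-negative and $\norm{\cdot}$ is monotone, so $\norm{\OBL\DD} = \max_{\GG \in S_*} \langle \OBL\DD, \GG\rangle_F = \max_{\GG \in S_*}\langle \OBL, \GG\DD^T\rangle_F$ (definition of the dual norm, attainment of the dual at a non-negative $\GG$, and the cyclic property of the trace). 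Hence $\CR(\OBL) = \max_{\QQ \in \mathcal{Y}_1}\langle \OBL, \QQ\rangle_F = \max_{\QQ \in \mathcal{Y}}\langle \OBL, \QQ\rangle_F$, the last step by \autoref{claim:relaxation_demand} since $\QQ \mapsto \langle \OBL, \QQ\rangle_F$ is linear.

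\emph{Step 2: rounding.} For any $\AA \in \mathcal{X}$ and ordered pair $(i,j)$, the defining constraints make $\AA_{\sbullet,(ij)}$ a non-negative edge vector whose net outflow is $\geq 1$ at the source $i$ and $\geq 0$ at every vertex other than $i$ and $j$; summing net outflows over all vertices gives $0$, so the net outflow at $j$ is $\leq -1$. Decomposing this vector into path flows and circulations, the $i$-to-$j$ paths carry total value $\geq 1$ (all net outflow from $i$ must reach the unique sink $j$); keeping a sub-flow of value exactly $1$ supported on those paths and discarding everything else yields $\OBL_{\sbullet,(ij)} \leq \AA_{\sbullet,(ij)}$ entrywise, where $\OBL_{\sbullet,(ij)}$ is a genuine unit $i$-to-$j$ flow (in particular each of its coordinates is at most $1$), i.e.\ the $(ij)$-column of an element of $\Rlo$. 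Assembling the columns produces, in polynomial time, $\OBL \in \Rlo$ with $\OBL \leq \AA$ entrywise; let $\widetilde{\OBL}$ be the result of applying this to $\widehat{\AA}$.

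\emph{Step 3: combining.} Every $\QQ \in \mathcal{Y}$ is entrywise non-negative, being a convex combination of products $\GG\DD^T$ of non-negative matrices, so $\widetilde{\OBL} \leq \widehat{\AA}$ entrywise gives $\langle \widetilde{\OBL}, \QQ\rangle_F \leq \langle \widehat{\AA}, \QQ\rangle_F$ for every $\QQ \in \mathcal{Y}$. Using Step 1 and the hypothesis,
\[
    \CR(\widetilde{\OBL}) \;=\; \max_{\QQ \in \mathcal{Y}}\langle \widetilde{\OBL}, \QQ\rangle_F \;\leq\; \max_{\QQ \in \mathcal{Y}}\langle \widehat{\AA}, \QQ\rangle_F \;\leq\; \min_{\AA \in \mathcal{X}}\langle \AA, \widehat{\QQ}\rangle_F + \epsilon .
\]
Finally, let $\OBL^\star$ attain $\min_{\OBL}\CR(\OBL)$. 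Since $\OBL^\star \in \Rlo \subseteq \mathcal{X}$ and $\widehat{\QQ} \in \mathcal{Y}$, Step 1 gives $\min_{\AA \in \mathcal{X}}\langle \AA, \widehat{\QQ}\rangle_F \leq \langle \OBL^\star, \widehat{\QQ}\rangle_F \leq \max_{\QQ \in \mathcal{Y}}\langle \OBL^\star, \QQ\rangle_F = \CR(\OBL^\star) = \min_{\OBL}\CR(\OBL)$, and chaining with the display yields $\CR(\widetilde{\OBL}) \leq \min_{\OBL}\CR(\OBL) + \epsilon$.

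I expect Step 2 to be the main obstacle: one must show that an arbitrary point of the \emph{relaxed} polytope $\mathcal{X}$, which satisfies only one-sided flow-balance inequalities and may carry excess flow and circulations, can be rounded to an honest element of $\Rlo$ without increasing the worst-case cost. Monotonicity of $\norm{\cdot}$ together with non-negativity of $\widehat{\AA}$ is precisely what makes discarding flow safe, since it can only decrease each load coordinate. The identifications in Step 1 are routine modulo \autoref{claim:relaxation_routing} and \autoref{claim:relaxation_demand}, which are already available.
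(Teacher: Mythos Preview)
Your proposal is correct and follows essentially the same line as the paper's argument: round $\widehat{\AA}$ to a genuine routing $\widetilde{\OBL}\le\widehat{\AA}$ by shedding excess flow, use non-negativity of $\QQ\in\mathcal{Y}$ to obtain $\langle\widetilde{\OBL},\QQ\rangle_F\le\langle\widehat{\AA},\QQ\rangle_F$, and bound $\min_{\AA\in\mathcal{X}}\langle\AA,\widehat{\QQ}\rangle_F$ by $\CR(\OBL^\star)$ via $\OBL^\star\in\Rlo\subseteq\mathcal{X}$. Your write-up is more explicit than the paper's sketch (in particular the flow-decomposition justification of Step~2 and the identification $\CR(\OBL)=\max_{\QQ\in\mathcal{Y}}\langle\OBL,\QQ\rangle_F$ in Step~1), but the strategy is the same.
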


To see how \autoref{lem:main_result_ccg_2} is helpful, notice that from $\widehat{\AA}$, we can construct $\widetilde{\OBL}$ that satisfies \autoref{def:routing} by reducing the coordinates of $\widehat{\AA}$. Since the cost function is monotone, for any $\QQ \in \mathcal{Y}$, $\langle \widetilde{\OBL}, \QQ \rangle \leq \langle \widehat{\AA}, \QQ \rangle$. Moreover, $\min_{\AA \in \mathcal{X}} \langle \AA, \widehat{\QQ} \rangle_F \leq \CR(\OBL^*)$ since the minimizer over $\mathcal{X}$ is in $\Rlo$. Hence, we have $\CR(\widetilde{\OBL}) \leq \CR(\OBL^*) + \epsilon$. Moreover, we get $\widehat{\AA},\widehat{\QQ}$ by using \autoref{thm:concav_convex}.

\section*{Acknowledgements}
This work was partially supported by the French Agence Nationale de la Recherche
(ANR), under grant ANR-21-CE48-0016 (project COMCOPT). We thank Alina Ene and Huy L\^{e} Nguy\~{\^{e}}n  for helpful conversations on approximating matrix norms.

\printbibliography

\appendix

\section{Multicommodity Flows}
\label{app:multicommodity_flows}

\subsection{Definitions}
\label{apps:mcf_definitions}

\paragraph{Multicommodity flow problem.}

We describe the general multicommodity flow framework introduced by Gupta et al. in \cite{Gupta06}. We are given $G=(V,E)$ an unweighted graph with $n$ vertices and $m$ edges and $\mathbf{R}$ a collection of routing requests $\mathbf{R} = \langle R_i = (s_i, t_i, d_i, k_i)\rangle$. Each routing request $R_i$ consists of a source-target pair $(s_i,t_i)$, a non-negative amount of traffic $d_i$ and a type of routing $k_i \in \llbracket 1 , K \rrbracket$. One can think of $k_i$ as a quality of service for instance. Each routing request is called a \emph{commodity} hence this problem is called the \emph{multicommodity} flow problem.

The cost of a multicommodity flow is a function of the load. For each edge $e \in E, \GG_{e,k}$ denotes the amount of flow of type $k$ that is sent along $e$. Each type of routing can induce different load on the same edge. Given a function $l : \rr^K \rightarrow \rr$, we can define the load $L(e)$ of the edge 
\begin{equation*}
    L(e) = l([\GG_{e,1}, \cdots, \GG_{e,K}]).
\end{equation*}
The cost is calculated by aggregating the edges via a function $\agg : \rr^m \rightarrow \rr$
\begin{equation*}
    \cost = \agg([L(e_1), \cdots, L(e_m)]).
\end{equation*}

This problem captures a lot of problems, for $l = \sum$ and $L = \max$ for instance, this captures the congestion of the framework.

Instead of using a collection of routing requests $\mathbf{R}$, we consider a demand matrix $\DD$ with $n(n-1)$ rows and $K$ columns. Each row corresponds to a directed pair of vertices and each column to a type of routing. The value of the matrix at row $i$ and column $k$ is the non-negative amount of traffic of type $k$ that needs to be routed from the source-target pair of the $i$-th line. We denote $\mathcal{D}$ the set of all demand matrices\footnote{Hence $\mathcal{D} = \lbrace \DD \in \rr^{n(n-1) \times K}_{\geq 0} : \DD \neq \zeros \rbrace$}. This notation allows to construct linear oblivious routings as matrices (hence linear).

\paragraph{Linear oblivious routings}

The multicommodity flow problem allows for a rather simple routing construction, for each source-target pair of vertices $(i,j)$ we route according to a precomputed unit non-negative flow $\vf_{(i,j)} \in \rr^m$ from $i$ to $j$. Those flows can be concatenated to form a non-negative matrix $\OBL \in \rr^{m \times n(n-1)}$. Given a demand matrix $\DD \in \mathcal{D}$, we call $\FF$ the flow induced by routing the demand $\DD$ with the oblivious routing $\OBL$. This flow matrix has $k$ columns and $m$ rows. Each one of its column is obtained by mulitplying $\OBL$ with a column of $\DD$: $\FF_{\sbullet,i} = \OBL \DD_{\sbullet,i}$. For the sake of simplicity, we write $\FF = \OBL (\DD)$. This flow matrix induces the load $L(e) = l(\OBL(\DD)_{e})$ on edge $e$. We call $\OBL$ a linear oblivious routing. Let $P_2^n$ be the set of oriented pairs of vertices and $P_3^n$ the set of oriented triplets of vertices.

\begin{definition}
    \label{def:routing}
    A matrix $\OBL$ is a linear oblivious routing if and only if it is a solution of the following system.
    \begin{align*}
        \forall e \in E, \forall (i,j) \in P_2^n : && \OBL_{e ,(i,j)} &\geq 0 \\
        \forall (i,j) \in P_2^n : && \sum_{e \in \OUT(i)} \OBL_{e , (i,j)} - \sum_{e \in \IN(i)} \OBL_{e, (i,j)} &= 1 \\
        \forall (k,i,j) \in P_3^n : && \sum_{e \in \OUT(k)} \OBL_{e, (i,j)} - \sum_{e \in \IN(k)} \OBL_{e, (i,j)} &= 0 \\
    \end{align*}
    We call $\Rlo$ the set of all linear oblivious routings on the graph $G$. A linear oblivious routing $\OBL$ on a demand $\DD$ creates a flow $\OBL(\DD)$ with cost 
    \begin{equation*}
        \cost(\OBL(\DD)) = \agg(L(l(\OBL(\DD)))).
    \end{equation*}
\end{definition}

In this report, we focus on the case where $\cost$ is a monotone norm $\norm{.}$ from $\rr^{m \times K}$ to $\rr$. 

\paragraph{Optimal flows and competitive ratio}

A multicommodity flow $\FF$ induces a cost $\agg(L(l(\FF)))$. The optimal cost of a demand matrix is $\opt(\DD) := \min_{R \in \Rlo} \cost(R \DD)$. The competitive ratio of the routing $\OBL$ is its worst performance against $\opt$. 

\begin{definition}
    The competitive ratio of $\OBL$ is defined as
    \begin{equation*}
        \CR(\OBL) = \max_{\DD \in \mathcal{D}} \frac{\cost(\OBL(\DD))}{\opt(\DD)}.
    \end{equation*}
\end{definition}

Since $\opt$ is linear, we may rescale the $\cost$ function to ensure that if $\cost(\DD) \leq 1$, then $\max_{i,j} \DD_{i,j} \leq 1$. This is possible since all norms are equivalent in finite dimension, moreover, this does not change the value of the competitive ratio. 

\begin{definition}
    The competitive ratio of $\OBL$ is 
    \begin{equation*}
        \CR(\OBL) = \max_{\DD \in \opt_{\leq 1}} \norm{\OBL(\DD)}
    \end{equation*}
    where $\opt_{\leq 1} := \lbrace \DD \in \mathcal{D} : \opt(\DD) \leq 1 \rbrace$.
\end{definition}

An optimal linear oblivious routing is a linear oblivious routing with the lowest competitive ratio among the competitive ratio of the others.

\subsection{From Undirected to Directed Graphs}
\label{apps:undirected_directed}

Consider an undirected graph $G = (V,E)$ with $n$ vertices and $m$ edges. We follow the same transformation as of \cite{Azar03}. For any undirected edge $(u,v)$, \cite{Azar03} consider the directed gadget $u,v,x,y$ with $5$ directed edges. The five directed edges are $e_1 = (u, x), e_2 = (v, x), e_3 = (y, u), e_4 = (y, v)$ and $e_5 = (x, y)$. We call $e_5$ the replacing edge of $(u,v)$. The new graph is a directed one, moreover, a linear oblivious routing on the new directed graph gives a linear oblivious routing on the undirected graph.

\section{Proofs from \texorpdfstring{\autoref{sec:preliminaries}}{ Preliminaries}}

\subsection{Proof of \texorpdfstring{\autoref{thm:concave_norm}}{ Theorem 3}}
\label{apps:concave_norm}

\begin{lemma}
    \label{lem:thm2_aux}
    For any non-negative vector $\vz \in \rr^n_{\geq 0}$, and $q \geq p \geq 1$, we have $g_{\vz} : \vu \mapsto \langle \vz, \vu^{1/q} \rangle^p$ is concave on $\rr^n_{\geq 0}$.
\end{lemma}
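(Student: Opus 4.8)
The plan is to establish concavity of $g_{\vz}$ by computing the Hessian of $f^p$ on the open orthant $\rr^n_{>0}$ and then extending to the boundary by continuity. If $\vz = \zeros$ the statement is trivial, so assume $\vz \neq \zeros$ and set $f(\vu) := \langle \vz, \vu^{1/q}\rangle = \sum_i \vz_i \vu_i^{1/q}$, so that $g_{\vz} = f^p$. On $\rr^n_{>0}$ the function $f$ is smooth and strictly positive (some $\vz_i>0$), and since $1/q \le 1$ each scalar map $t \mapsto t^{1/q}$ is concave, so $f$ is concave with diagonal, negative semidefinite Hessian. The final step will be: since $g_{\vz}$ is continuous on all of $\rr^n_{\geq 0}$ and $\rr^n_{>0}$ is dense in it, concavity on the interior extends to $\rr^n_{\geq 0}$ by approximating any two points $\vu^{(1)},\vu^{(2)}$ by $\vu^{(j)}+\tfrac1j\ones$ and passing to the limit.

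First I would compute, on $\rr^n_{>0}$, the gradient $\nabla g_{\vz} = p f^{p-1}\nabla f$ and the Hessian
\[
\nabla^2 g_{\vz} = p(p-1)f^{p-2}\,\nabla f (\nabla f)^T + p f^{p-1}\,\nabla^2 f .
\]
Fix a direction $\vv$, and introduce $a_i := \vz_i \vu_i^{1/q} \geq 0$ and $w_i := \vv_i/\vu_i$. A short calculation gives $\langle \nabla f,\vv\rangle = \tfrac1q\sum_i a_i w_i$, $\ \vv^T(\nabla^2 f)\vv = -\tfrac{q-1}{q^2}\sum_i a_i w_i^2$, and $f = \sum_i a_i$. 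Substituting and dividing the quadratic form $\vv^T(\nabla^2 g_{\vz})\vv$ by the positive quantity $p f^{p-2}/q^2$, the condition $\nabla^2 g_{\vz}\preceq 0$ reduces to the scalar inequality
\[
(p-1)\Big(\sum_i a_i w_i\Big)^2 \;\leq\; (q-1)\Big(\sum_i a_i\Big)\Big(\sum_i a_i w_i^2\Big).
\]

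This inequality follows from the weighted Cauchy–Schwarz inequality $\big(\sum_i a_i w_i\big)^2 \leq \big(\sum_i a_i\big)\big(\sum_i a_i w_i^2\big)$ together with $0 \leq p-1 \leq q-1$ — which is exactly where the hypothesis $q \geq p \geq 1$ enters (the case $p=1$ being the degenerate one where the left side vanishes). This proves $g_{\vz}$ concave on $\rr^n_{>0}$, and the continuity/density argument finishes the proof. The only mildly delicate point is the Hessian bookkeeping that isolates the scalar inequality in the form above (in particular getting the direction of the reduction right); the Cauchy–Schwarz step and the boundary extension are routine.
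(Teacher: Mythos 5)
Your proof is correct, but it takes a genuinely different route from the paper's. You argue via second-order conditions: on the open orthant $\rr^n_{>0}$ (after dispensing with $\vz=\zeros$) you compute the Hessian of $g_{\vz}=f^p$, $f(\vu)=\langle \vz,\vu^{1/q}\rangle$, and your bookkeeping is right — with $a_i=\vz_i\vu_i^{1/q}$, $w_i=\vv_i/\vu_i$, the quadratic form reduces, after dividing by the positive factor $pf^{p-2}/q^2$, to $(p-1)\bigl(\sum_i a_iw_i\bigr)^2 \leq (q-1)\bigl(\sum_i a_i\bigr)\bigl(\sum_i a_iw_i^2\bigr)$, which indeed follows from weighted Cauchy--Schwarz together with $0\leq p-1\leq q-1$; the continuity/density step extending concavity from $\rr^n_{>0}$ to $\rr^n_{\geq 0}$ is also sound. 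The paper instead proves concavity of $f_{\vz}(\vu)=\langle\vz,\vu^{1/q}\rangle^q$ directly from the superadditivity (reverse Minkowski) inequality $\bigl(\sum_i(\vx_i+\vy_i)^{1/q}\bigr)^q\geq\bigl(\sum_i\vx_i^{1/q}\bigr)^q+\bigl(\sum_i\vy_i^{1/q}\bigr)^q$ applied to $\vx_i=t\vz_i^q\vu_i$, $\vy_i=(1-t)\vz_i^q\vv_i$, and then composes with the concave nondecreasing map $x\mapsto x^{p/q}$. The paper's argument is shorter, works on all of $\rr^n_{\geq 0}$ at once (no smoothness or boundary-limit issues), and isolates the roles of $q$ and $p$ into two clean steps; your Hessian computation is more hands-on and makes quantitatively explicit where $q\geq p$ enters (the $(p-1)$ versus $(q-1)$ comparison), at the cost of the differentiability hypotheses and the extra approximation argument at the boundary.
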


\begin{proof}[Proof of \autoref{lem:thm2_aux}]
    Given a non-negative vector $\vz$, let $f_{\vz} : \vx \mapsto \langle \vz, \vx^{1/q} \rangle^q$. Let $\vu,\vv \in \rr^n_{\geq 0}$ and $t \in [0,1]$. Define $\vx_i := t \vz_i^q \vu_i$ and $\vy_i := (1-t) \vz_i^q \vv_i$. We have 
    \begin{align*}
        f_{\vz}( t \vu + (1-t) \vv) &= \left(\sum_{i=1}^{n} \vz_i (t \vu_i + (1-t) \vv_i)^{1/q} \right)^q \\
        &= \left( \sum_{i=1}^{n} (t \vz_i^{q} \vu_i + (1-t) \vz_i^{q} \vv_i)^{1/q}\right)^q \\
        &= \left( \sum_{i=1}^{n} (\vx_i + \vy_i)^{1/q}\right)^q \\
        &\geq \left(\sum_{i=1}^{n} \vx_i^{1/q} \right)^q +  \left(\sum_{i=1}^{n} \vy_i^{1/q} \right)^q \\
        &= t \left(\sum_{i=1}^n \vz_i \vu_i^{1/q} \right)^q + (1-t) \left(\sum_{i=1}^{n} \vz_i \vv_i^{1/q} \right)^q \\
        &= t f_{\vz}(\vu) + (1-t) f_{\vz}(\vv).
    \end{align*}
    Where the inequality follows from Minkowski's inequality. Moreover, since $q \geq p$, $x \mapsto \sqrt[q/p]{x}$ is concave and non-decreasing, hence $g_{\vz} = \sqrt[q/p]{f_{\vz}}$ is concave.
\end{proof}

We are now ready to prove \autoref{thm:concave_norm}.

\begin{proof}
    Let $g : \vx \mapsto \psnorm[p]{\AA \vx^{1/q}}^{p}$. 
    Let $\vz_i$ be the $i$\textsuperscript{th} row of $\AA$. We have $g(\vx) = \sum_{i=1}^{m} g_{\vz_i}(\vx)$. Hence, by \autoref{lem:thm2_aux}, $g$ is concave. Moreover, since $x \mapsto \sqrt[p]{x}$ is concave and non-decreasing, we have that $\vx \mapsto \sqrt[p]{g(\vx)} = \psnorm[p]{\AA \vx^{1/q}}$ is concave.
\end{proof}

\section{Proofs from \texorpdfstring{\autoref{sec:lp_norm_algo}}{ Efficiently computing induced Lp-norms of non-negative matrices}}

\subsection{Proof of \texorpdfstring{\autoref{lem:basic_invariant}}{ Lemma 6}}
\label{apps:basic_invariant}

\begin{proof}
    We have
    \begin{align*}
        \frac{\psnorm[q]{\AA \vx^{(T)}}^q}{\psnorm[q]{\vx^{(T)}}^q} &= \frac{\psnorm[p]{ \AA \vx^{(0)}}^q + \sum_{t=0}^{T-1} \psnorm[p]{ \AA \vx^{(t+1)}}^q - \psnorm[p]{ \AA \vx^{(t)}}^q}{\psnorm[q]{\vx^{(T)}}^q} \\
        &\geq \frac{\psnorm[p]{\AA \vx^{(0)}}^q}{\psnorm[q]{\vx^{(T)}}^q} + \left(\left(1-\frac{\epsilon}{2}\right)V\right)^q \frac{\sum_{t=0}^{T-1}\psnorm[q]{\vx^{(t+1)}}^q - \psnorm[q]{\vx^{(t)}}^q}{\psnorm[q]{\vx^{(T)}}^q} \\
        &\geq \left(\left(1-\frac{\epsilon}{2}\right)V\right)^q \left(1-\frac{\psnorm[q]{\vx^{(0)}}^q}{\psnorm[q]{\vx^{(T)}}^q}\right)\\
        &= \left(\left(1-\frac{\epsilon}{2}\right)V\right)^q \left(1-\frac{1}{\psnorm[q]{\vx^{(T)}}^q}\right).
    \end{align*}
    Hence, once $1-\frac{1}{\psnorm[q]{\vx^{(T)}}^q} \geq (1-\epsilon/2)^q$, we have 
    \begin{equation*}
        \frac{\psnorm[p]{\AA \vx^{(T)}}}{\psnorm[q]{\vx^{(T)}}} \geq (1-\epsilon)V.
    \end{equation*}
    Since $\epsilon q \leq 2$, we have $(1-\epsilon/2)^q \leq 1-\epsilon q/4$, hence it is sufficient to have $\psnorm[q]{\vx^{(T)}}^q \geq 4/q \epsilon$.
\end{proof}

\subsection{Proof of \texorpdfstring{\autoref{thm:algo_qpnorm_better}}{ Theorem 11}}
\label{apps:algo_qpnorm_better}

\begin{proof}
    First, if the algorithm outputs ``infeasible'', we know by \autoref{lem:infeasibility_full} that it means $V > \qpnorm{\AA}$. Hence, assuming the algorithm does not return ``infeasible'', we will prove it outputs $\vx$ such that $\psnorm[p]{\AA \vx}/\psnorm[q]{\vx} \geq (1-\epsilon)V$.

    First, we prove that the invariant of \autoref{cor:first_invariant} is satisfied. At iteration $t$, if coordinate $i$ is scaled, then
    \begin{equation*}
        \frac{1}{\left(1+\alpha\right)^{q-1}} \vq(\vx^{(t)})_i \geq \frac{1}{\left(1+\frac{\epsilon}{8}\right)^{q-1}} \left(\left(1-\frac{\epsilon}{4}\right)V\right)^q \geq \left(\left(1-\frac{\epsilon}{2}\right)V\right)^q.
    \end{equation*}

    Moreover, using \autoref{cor:second_invariant}, we know that the set of coordinates with potential larger than $\left(\left(1-\frac{\epsilon}{4}\right)V\right)^q (1+\frac{\epsilon}{8})^{q-1} < V^q$ is decreasing. However, using \autoref{lem:infeasibility_full}, we know at least one coordinate has potential larger than $V^q$ at iteration $T$. Hence, this coordinate always had potential larger than $V^q$. Assume $k$ is such a coordinate, we have
    \begin{equation*}
        \psnorm[q]{\vx^{(T)}}^q \geq \left(\vx_k^{(T)}\right)^q = (1+\alpha)^{qT} n^{-1}. 
    \end{equation*}
    With $T \geq \frac{8}{q\epsilon} \log \frac{4n}{q\epsilon}$, we have $(1+\alpha)^{qT} n^{-1} \geq \frac{4}{q \epsilon}$ which implies 
    \begin{equation*}
        \frac{\psnorm[p]{\AA \vx^{(T)}}}{\psnorm[q]{\vx^{(T)}}} \geq (1-\epsilon)V
    \end{equation*}
    using \autoref{cor:first_invariant}.
\end{proof}

\subsection{Proof of \texorpdfstring{\autoref{cor:algo_qpnorm_better}}{ Corollary 13}}
\label{apps:cor_algo_qpnorm_better}

\begin{proof}
    Using \autoref{thm:lewis_sampling}, with $\tilde{\epsilon} = \epsilon/4$, we obtain a matrix $\AA'$ such that $(1- \tilde{\epsilon})\psnorm[p]{\AA' \vx} \leq \psnorm[p]{\AA \vx} \leq (1+\tilde{\epsilon})\psnorm[p]{\AA' \vx}$. Computing this matrix cost $\OOt(\nnz(\AA) + n^{\omega})$. Since $\AA$ is non-negative, then so is $\AA'$ as it corresponds to rescaled rows of $\AA$. Hence, we can use \autoref{thm:algo_qpnorm_better} on $\AA'$ with $\tilde{\epsilon}$ to obtain $\vx$ such that $\psnorm[p]{\AA' \vx} \geq (1-\tilde{\epsilon})\qpnorm[q][p]{\AA'}$. We can bound $\psnorm[p]{\AA \vx}$ by $\qpnorm[q][p]{\AA'}$.
    \begin{equation*}
        \psnorm[p]{\AA \vx} \geq (1-\tilde{\epsilon}) \psnorm[p]{\AA' \vx} \geq (1-\tilde{\epsilon})^2 \qpnorm[q][p]{\AA'}.
    \end{equation*}
    Moreover, we can bound $\qpnorm[q][p]{\AA'}$ by $\qpnorm[q][p]{\AA}$.
    \begin{equation*}
        \qpnorm[q][p]{\AA'} = \max_{\psnorm[q]{\vy} \leq 1} \psnorm[p]{\AA' \vy} \geq \max_{\psnorm[q]{\vy} \leq 1} \frac{1}{1+\tilde{\epsilon}} \psnorm[p]{\AA \vy} = \frac{1}{1+ \tilde{\epsilon}} \qpnorm[q][p]{\AA}.
    \end{equation*}
    Hence, $\psnorm[p]{\AA \vx} \geq \frac{(1-\tilde{\epsilon})^2}{1+\tilde{\epsilon}}\qpnorm[q][p]{\AA} \geq (1-\epsilon)\qpnorm[q][p]{\AA}$.

    Moreover, $\AA'$ has $\OOt\left(\frac{n^{\max \lbrace 1, p/2 \rbrace}}{\epsilon^2} \right)$ rescaled rows of $\AA$, hence $\nnz(\AA') \leq \OOt\left(\frac{r}{\epsilon^2} n^{\max \lbrace 1, p/2 \rbrace}\right)$ where $r$ is the maximum number of nonzero in a row of $\AA$.
\end{proof}
\end{document}